\newenvironment{proof}{\par\noindent{\bf Proof:}}{\mbox{}\hfill$\qed$\\}
\newtheorem*{theorem-non}{Theorem}
\newlength{\bibitemsep}\setlength{\bibitemsep}{.2\baselineskip plus .05\baselineskip minus .05\baselineskip}
\newlength{\bibparskip}\setlength{\bibparskip}{0pt}
\let\oldthebibliography\thebibliography
\renewcommand\thebibliography[1]{
  \oldthebibliography{#1}
  \setlength{\parskip}{\bibitemsep}
  \setlength{\itemsep}{\bibparskip}
}
\newcommand{\ignore}[1]{ }
\newcounter{rem}
\def\etal{\textsl{et~al.}}
\def\qed{\hbox{\rlap{$\sqcap$}$\sqcup$}}
\begin{document}

\title{Local Routing on a Convex Polytope in $\mathbb{R}^3$}
\titlerunning{Local routing on a convex polytope in $\mathbb{R}^3$} 

\author{
Sreehari Chandran\inst{1}
\and
R. Inkulu\inst{1}
}

\institute{
Department of Computer Science and Engineering\\
Indian Institute of Technology Guwahati\\
\email{\{c.sreehari,rinkulu\}@iitg.ac.in}
}


\maketitle

\pagenumbering{arabic}
\setcounter{page}{1}

\begin{abstract}
Given a convex polytope $P$ defined with $n$ vertices in $\mathbb{R}^3$ and a parameter $\epsilon \in (0, 1)$, this paper presents an algorithm to preprocess $P$ to compute routing tables at every vertex of $P$ so that a data packet can be routed on the boundary $\partial P$ of $P$ from any vertex $s$ of $P$ to any other vertex $t$ of $P$.
At every vertex $v$ of $P$ along the routing path on $\partial P$ from $s$, until the packet reaches $t$, the next hop is determined using the routing tables at $v$ and the information stored in the packet header.
In $O(n \min(n^2, \frac{1}{\epsilon^7} \lg{n}))$ time, the preprocessing algorithm computes a routing table at every vertex of $P$ of amortized size $O((\min(n, \frac{1}{\epsilon^{3/2}}))\lg{n})$ bits.
If the shortest distance between $s$ and $t$ on $\partial P$ is $d(s, t)$, then the routing path produced by this algorithm has length at most $\frac{8+\epsilon}{\sin{\theta_m}}(D+d(s,t))$.
Here, $D$ is the maximum length of the diagonal of any cell when $\partial P$ is partitioned into $\frac{1}{\epsilon^3}$ geodesic cells of equal size, and $\theta_m$ is half the minimum angle between two edges bounding any face of $\partial P$.
\end{abstract}

\section{Introduction}
\label{sect:intro}

For any plane $h$ in $\mathbb{R}^3$, excluding $h$ from $\mathbb{R}^3$ defines two regions, each of which is called an {\it open half-space} defined by $h$.
A half-space defined by $h$ together with $h$ is a {\it closed half-space} of $h$.
The intersection of a finite set of closed half-spaces in $\mathbb{R}^3$, each defined by a plane in $\mathbb{R}^3$, is called a {\it convex polytope}.
The convex hull of a finite set of points in $\mathbb{R}^3$ is also a convex polytope.
Minkowski~\cite{journals/teubner/Minkowski1896} proved that these two definitions of convex polytopes are equivalent. 
Indeed, these definitions extend to hyperplanes located in high-dimensional spaces as well.
The boundary $\partial P$ of $P$ consists of all points $p \in P$ such that for any open ball $B$ of infinitesimally small radius centered at $p$, $B \cap P \ne B$.
In this paper, we assume the convex polytope is full-dimensional and is located in $\mathbb{R}^3$.

For any data packet originating at any (source) vertex of $P$, the {\it routing problem} seeks to find a path on $\partial P$ for that packet to proceed to its destination vertex of $P$. 
At every vertex $v$ along that path, the next hop for the packet is determined from the routing table stored at $v$ and the routing information in the packet header.
This requires preprocessing $P$ to construct routing tables at its vertices. 
Naturally, the routing information in the packet header includes the destination vertex identifier.
The typical objectives of a routing problem include minimizing the following: 
(i) for any two vertices $s$ and $t$ of $P$, the maximum ratio of the length of the routing path from $s$ to $t$ computed by the routing algorithm to the (geodesic) distance from $s$ to $t$, which is called the {\it stretch} of the routing path,
(ii) the maximum space required for routing tables at any vertex of $P$,
(iii) the time to preprocess $P$ to compute routing tables at the vertices of $P$,
(iv) the maximum size of the label associated with any vertex of $P$ for identifying vertices, and
(v) the maximum number of bits required for the routing information in the header of the packet.

A local routing algorithm determines the next hop based on the location of $t$, the location of the vertex $v$ at which the packet is residing, and the ($k$-)neighbourhood of $v$, for a fixed $k$.
In doing so, the algorithm may use the information stored at $v$.
Specifically, in the case of local routing on $\partial P$, the next hop for any packet at any intermediate vertex $v$ of $\partial P$ along the routing path must be a vertex $v'$ that is a neighbour of $v$.
That is, there must exist an edge of $P$ between $v$ and $v'$ so that $v$ can forward the packet to $v'$.

In the plane, a simple polygon with pairwise-disjoint simple polygonal holes is called a polygonal domain.
Since no path can intersect with the interior of any of these holes, these holes are known as obstacles of the polygonal domain.
The routing problem for polygonal domains seeks to route data packets from one vertex to another.
This involves preprocessing the input polygonal domain to build the data structures required at its vertices. 
A pairwise-disjoint set of convex polygonal obstacles in the plane is called a convex polygonal domain.
Banyassady~\etal~\cite{journals/cgta/BanyaCKMR20} devised a routing scheme for simple polygonal domains. 
For convex polygonal domains, Inkulu and Kumar~\cite{journals/ijfcs/InkuluKum24} devised a routing scheme that improves the preprocessing time and the size of routing tables of the algorithm given in \cite{journals/cgta/BanyaCKMR20}.
Gaur and Inkulu~\cite{journals/joco/GaurInkulu25} devised a routing scheme with improved preprocessing time for simple polygons.
The routing scheme presented in Gaur and Inkulu~\cite{journals/joco/GaurInkulu26} improves both the total size of routing tables and the preprocessing time of the algorithm in \cite{journals/cgta/BanyaCKMR20} for convex polygonal domains.

For the problem of computing an optimal shortest path on $\partial P$ between two vertices of a polytope, Sharir and Schorr~\cite{journals/siamcomp/SharirS86}, Chen and Han~\cite{journals/ijcga/ChenH96}, and Schriber and Sharir~\cite{journals/dcg/SchreiberS08} devised algorithms with time complexities $O(n^3\lg{n})$, $O(n^2\lg{n})$, and $O(n\lg{n})$, respectively.
Hershberger and Suri~\cite{journals/comgeo/HershbergerS98b} devised a simple $O(n)$ time algorithm to output a $2$-approximate shortest path.
Agarwal~\etal~\cite{journals/jacm/AgarwalPSV97} developed an $O(n+(\lg{\frac{1}{\epsilon}}) + \frac{1}{\epsilon^3})$ time algorithm to output a $(1+\epsilon)$-approximate shortest path.
This result is efficient if only the $(1+\epsilon)$-approximate distance is needed, not the path itself.
Computing shortest paths to all the vertices of a convex polytope from a fixed source vertex was also studied in \cite{journals/ijcga/ChenH96,journals/siamcomp/MitchellMP87,journals/siamcomp/SharirS86}.
Especially, after preprocessing the convex polytope, the algorithm by Har-Peled~\cite{journals/siamcomp/Har-Peled99} computes a shortest path from a fixed source to any vertex in $O(\lg{\frac{n}{\epsilon}})$ time.
In addition, there is work on computing shortest paths on polytopes which are not necessarily convex.
For general polytopes in $\mathbb{R}^3$, the most notable result is by Mitchell~\etal~\cite{journals/siamcomp/MitchellMP87}, which computes an optimal single-source shortest path from any given vertex to all other vertices of the input polytope by progressing a Dijkstra wavefront over $\partial P$.
Mitchell~\cite{coll/hb/Mitch17} surveys shortest path algorithms in geometric domains.

A graph is a geometric graph if it is embedded in the plane.
Yao-graphs~\cite{journals/siamcomp/Yao82} and \textTheta-graphs~\cite{conf/stoc/Clarkson87} are commonly used in computing geometric spanner networks. 
The significant works for routing in geometric graphs include \cite{journals/jocg/BoseKRV18,journals/tcs/BoseMorin04,conf/distrcomp/HassPeleg00,conf/cccg/Kranakis99}.
The routing algorithms for \textTheta-graphs and half-\textTheta$_6$-graphs are considered in Bose~\etal~\cite{journals/jocg/BoseCD20,journals/siamjc/BoseFRV15}, \textTheta$_4$-graphs are considered in \cite{conf/soda/BoseCHS19}, and Yao-graphs in Bose~\etal~\cite{conf/isaac/BoseHST24}.
And the algorithms for routing on Delaunay triangulations are explored in \cite{conf/esa/BonichBDDHS18,journals/dcg/BonichPBDPR17,journals/siamjc/BoseFRV15}.
Competitive local routing schemes for geometric graphs are presented in \cite{conf/esa/BonichBDDHS18,journals/dcg/BonichPBDPR17,journals/siamjc/BoseFRV15,journals/jocg/BoseFRV17,conf/isaac/BoseHST24,journals/tcs/BoseMorin04}.
These local routing algorithms do not use any routing tables.
For both convex and general polytopes, Kapoor and Li~\cite{conf/isaac/KapoorL09} presented algorithms to compute a geodesic Steiner spanner network on the surface of the polytope, with the set comprising vertices of the polytope being a subset of the node set of the spanner network.

The routing problem is also popular in abstract graphs.
In this case, a packet needs to be routed from any source vertex of the input graph $G$ to the packet's destination vertex along a path in $G$. 
The preprocessing algorithm computes routing tables at every vertex of $G$.
A naive preprocessing algorithm could compute all-pairs shortest paths in $G$, and for a vertex $u'$ adjacent to $u$ in a shortest path from $u$ to $v$, the algorithm may store the tuple $(u', v)$ as a routing table entry at $u$.
Though this routing scheme yields an optimal stretch, the routing table at any node will have $O(n)$ entries.
Early work on the routing problem in graphs focused on routing schemes for the special case of general graphs such as trees \cite{conf/icalp/FragGavo01,journals/comp/SantoroK85}, planar graphs \cite{journals/jacm/Thorup04}, unit disk graphs \cite{journals/algorithmica/KaplanMRS18,journals/cgta/YanXD12}, and for networks of low doubling dimension \cite{journals/talg/KonjevodRX16}.
For every $k \ge 1$, the routing scheme presented by Awerbuch~\etal~\cite{journals/jalgo/AwerbuchBLP90} guarantees a stretch factor of $O(k^2)$ and requires storing $\tilde{O}(k n^{1/k} (\lg{D}))$ bits of routing information per vertex of the input graph with diameter $D$.
Cowen~\cite{journals/jalgo/Cowen01} designed a routing scheme that has a multiplicative stretch $3$, packet headers of size $O(\lg{n})$, and routing tables of size $\tilde{O}(n^{2/3})$.
Thorup and Zwick~\cite{conf/spaa/ThorupZwick01,journals/jacm/ThorupZwick05} presented a routing scheme that uses $\tilde{O}(kn^{1/k})$ bits of space at every node and has a multiplicative stretch of $4k-5$, for every $k \geq 1$.
Chechik~\cite{conf/podc/Chechik13} devised a scheme using $\tilde{O}(n^{1/k} \lg{D})$ bits of space at every node and having a stretch $c k$, for some $c < 4$ and for sufficiently large integer $k$.
Roditty and Tov~\cite{conf/podc/RodittyTov15} designed a routing scheme using $\tilde{O}(\frac{1}{\epsilon}n^{1/k} \lg{D})$ bits of space at each node and having a stretch $4k-7+\epsilon$, for every integer $k \ge 1$.
In all these results, $D$ is the diameter of the input graph and $\tilde{O}()$ hides powers of $\lg{n}$.
Peleg and Upfal~\cite{journals/jacm/PelegUpfal89} have shown that any routing scheme with constant stretch factor needs to store $\Omega(n^c)$ bits per node for some constant $c > 0$.

\subsection*{\bf Preliminaries}
\label{subsect:prelim}

We denote the input convex polytope located in $\mathbb{R}^3$ by $P$.
The boundary of $P$ (defined earlier) is denoted by $\partial P$, and the relative interior $P \backslash \partial P$ of $P$ is denoted by $rint(P)$.
Analogously, for any line segment $\ell$, $\ell$ excluding its endpoints is the relative interior of $\ell$, denoted by $rint(\ell)$.
We assume each face of $P$ is triangulated, and these triangles together triangulate $\partial P$.
The set comprising the vertices of $P$ is denoted by $V$, and $|V|$ is denoted by $n$.
We say that the two vertices $u$ and $v$ of $P$ are neighbours whenever there is a triangulation edge joining them. 
Each vertex is associated with a routing label that identifies it.
To distinguish from the vertices of $P$, every point in $\mathbb{R}^3$ that is not necessarily a vertex of $P$ is called a point, and the vertices of graphs are called nodes.

For any point $p$ on $\partial P$, the projection of $p$ on plane $\eta$ is a point $p'$ located on $\eta$ such that among all the line segments with one endpoint on $\eta$ and the other point being $p$, the line segment $pp'$ is the shortest.
For any face $f$ of $P$, the normal $n_f$ to $f$ is a vector orthogonal to $f$, originating from any point on $f$ with $n_f \cap rint(P) = \emptyset$.
By saying a plane $h$ in $\mathbb{R}^3$ is stored in any entry of a routing table $\rho$, we mean a point $p$ located on $\partial P$ and two non-parallel vectors located on $h$ both originated from $p$ together defining $h$, and the tuple consisting of $p$ and two points respectively located on each of these vectors is stored in $\rho$.
When a packet is at a vertex $v$, and its next hop along the routing path is $w$, we say the packet is {\it forwarded} from $v$ to $w$.
Otherwise, the packet is said to have been {\it routed} from $v$ to $w$.

The Euclidean distance between any two points $p$ and $q$ in $\mathbb{R}^3$ is the distance along the line segment joining $p$ and $q$, and is denoted by $\Vert pq \Vert$.
For an edge-weighted graph $G$, the shortest path distance between any two nodes $u$ and $v$ of $G$  is denoted by $d_G(u, v)$.
For any polygonal path $\lambda$ on $\partial P$, the length of $\lambda$, denoted by $|\lambda|$, is the sum of the lengths of all line segments belonging to that path.
A geodesic path between any two points $p, q \in \partial P$ is a polygonal path located on $\partial P$ consisting of line segments such that every endpoint of every such segment (except for $p$ and $q$) is either a point on an edge of $P$ or a vertex of $P$ such that the path length cannot be improved by local adjustments.
For any two points $p, q$ located on $\partial P$, among all the geodesic paths between $p$ and $q$, a path that has the minimum length is called a shortest (geodesic) path between $p$ and $q$.
The distance between $p$ and $q$ located on $\partial P$ is the length of a shortest path between $p$ and $q$.
The {\it stretch} of a routing scheme on $\partial P$ is the maximum, over all pairs of vertices $u$ and $v$ of $P$, ratio between the length of the routing path from $u$ to $v$ output by that routing scheme to the distance between $u$ and $v$ on $\partial P$. 

Let $r'$ and $r''$ be two non-parallel rays with origin at point $p$.
Let $\overrightarrow{v_1}$ and $\overrightarrow{v_2}$ be the unit vectors along rays $r'$ and $r''$, respectively.
A {\it cone} $C_p(r', r'')$ is the set of points defined by rays $r'$ and $r''$ such that a point $q \in C_p(r', r'')$ if and only if $q$ can be expressed as a convex combination of vectors $\overrightarrow{v_1}$ and $\overrightarrow{v_2}$ with positive coefficients.
When the rays are evident from the context, we denote $C_p(r', r'')$ by $C_p$ or $C$.

\vspace{-0.05in}
\subsection*{\bf Our Contributions}
\label{subsect:contrib}

Given a convex polytope $P$ defined with $n$ vertices and a real number $\epsilon \in (0, 1)$, in the preprocessing phase, our algorithm computes a routing table at every vertex of $P$.
Our preprocessing algorithm approximates $P$ by another convex polytope $P'$ whose complexity is independent of $n$ but is dependent only on the input parameter $\epsilon$, though our algorithm does not compute $P'$ itself.
As part of this, following Dudley~\cite{journals/apprxth/Dudley74}, Agarwal~\etal~\cite{journals/jacm/AgarwalPSV97}, and Kapoor and Li~\cite{conf/isaac/KapoorL09}, we partition $\partial P$ into $\delta$-patches.
Each $\delta$-patch is a contiguous section of $\partial P$ and the angle between any two faces belonging to a $\delta$-patch is upper bounded by $\delta$.
Later, in the analysis, we express $\delta$ in terms of $\epsilon$.
For each $\delta$-patch $\delta_i$ of $P$, we select one face $f_i$ belonging to $\delta_i$.
For every such $f_i$, let $h_i$ be the half-space defined by the plane containing $f_i$ such that $h_i$ contains $P$. 
Then, $P' = \bigcap h_i$ is a convex polytope, and $P'$ contains $P$.

Next, as described below, for every $\delta$-patch $\delta_i$, our algorithm deterministically selects a subset $R_i$ of the vertices of $\delta_i$.
For computing $R_i$, our algorithm projects all the vertices of $\delta_i$ to the plane $\delta_i'$ containing the representative face of $\delta_i$.
For every $i$, $\delta_i'$ is partitioned into $\frac{1}{\epsilon}$ grid cells.
And, among all the vertices of $P$ which get projected to any grid cell $g_{ij}'$, one arbitrary point $r_{ij}'$ is chosen as the representative of cell $g_{ij}'$.
If no vertex of $P$ is projected to a grid cell, then that grid cell will not have any representative vertices.
A vertex $r_{ij}$ of $\delta_i$ is defined as a representative vertex of $\delta_i$ whenever the point of projection $r_{ij}'$ of $r_{ij}$ on $\delta_i'$ is a representative of some grid cell located on $\delta_i'$. 
Thus, for every patch $\delta_i$, our algorithm defines at most $\frac{1}{\epsilon}$ representative vertices.
The representative vertex $r_{ij}$ is the hub for the set $S_{ij}$ of vertices of $P$ that gets projected to $g_{ij}'$.
That is, each vertex in $S_{ij}$ other than $r_{ij}$ has routing information to forward any packet to only $r_{ij}$, and $r_{ij}$ can route the packet to any other vertex in $S_{ij}$.
In addition, among all the vertices of $S_{ij}$, only $r_{ij}$ has the information to route packets to representative vertices that are located on $\delta_i$ or to the representative vertices located elsewhere on $\partial P$.

By the end of the preprocessing phase, any representative vertex $r_{ij}$ of $P$ has two kinds of entries in its routing table: one to route to any vertex whose representative vertex is $r_{ij}$, and the other to route to any representative vertex of $P$.
Any non-representative vertex $v$ of $P$ has only one routing entry, which corresponds to routing to its representative vertex $r_{ij}$.
Specifically, the routing table at $v$ stores a plane orthogonal to $\delta_i'$ containing both $v$ and $r_{ij}$. 
And at $r_{ij}$, for every vertex $v \in S_{ij} \backslash \{r_{ij}\}$, we store a plane orthogonal to $\delta_i'$ containing both $r_{ij}$ and $v$.
Analogously, for any patch $\delta_i$ and for any two representative vertices $r_{ij}$ and $r_{ik}$ belonging to $\delta_i$, the routing table at $r_{ij}$ stores a plane orthogonal to $\delta_i'$ containing $r_{ij}$ and $r_{ik}$.
Indeed, our preprocessing algorithm ensures that every routing entry in any routing table is to a point that belongs to the same patch.
For any plane $\eta$, the intersection of $\eta$ with $\delta_i$ defines a polygonal path $\lambda$ on $\delta_i$; the routing path computed by our algorithm uses a subset of edges of $P$ that intersect $\lambda$. 

To facilitate routing from any representative vertex $u$ located on patch $\delta_i$ to any representative $v$ located on patch $\delta_j$ with $i \ne j$, we compute a geodesic Steiner spanner on each of the faces of $P'$.
For every $i$, let $R_i'$ be the set comprising all the points resulting from projecting representative vertices of $\delta_i$ on the plane containing $\delta_i'$. 
By following Kapoor and Li~\cite{conf/isaac/KapoorL09}, for every $i$, we compute a \textTheta-graph based geodesic Steiner spanner $G_i'$ whose node set is a superset of $R_i'$.
We denote $\bigcup_i G_i'$ with $G'$.
For every two representative vertices $u, v$ of $P$ respectively located on patches $\delta_i$ and $\delta_j$, with their respective projections $u', v'$ located on $P'$, while considering the routing of a packet from $u'$ to $v'$ in $G'$, we find the node $w'$ in $G'$ the packet from $u'$ gets forwarded when the routing algorithm in Thorup and Zwick~\cite{conf/spaa/ThorupZwick01} is used.
Then we store in the routing table at $u$, for the entry corresponding to $v$, the plane orthogonal to $\delta_i'$ containing $u'$ and $w'$.
If $w'$ corresponds to a Steiner node $r$ on an edge $e$ of $P$, since Steiner points on $\partial P$ cannot have a routing table, we mark both the vertices of $P$ that are incident to $e$.
One of these marked vertices receives packets on behalf of $w$, and both the marked vertices have sufficient information in their routing tables to forward such packets further.

The routing tables computed during the preprocessing phase help route data packets between the vertices of $P$.
Specifically, at any vertex $v$ of $P$, the next hop for the packet is determined using the routing table stored at $v$ together with the data in the packet header.
For the packet at any vertex $u$, the local routing algorithm proposed here guarantees the next hop of the packet is a vertex $v$ on $\partial P$ such that there is an edge of $P$ joining $u$ and $v$.
Let $\eta$ be the plane used to route the packets from vertex $u$ of $\delta_i$ to a point $p$ located on $\delta_i$. 
Then, in the routing phase, our algorithm uses a greedy strategy to route, considering the polygonal path $\eta \cap \delta_i$.
Significantly, the routing path is a subgraph of edges of $P$ that are intersected by $\eta \cap \delta_i$.
The approximation resulting from the greedy algorithm is analyzed using a stays ahead argument.

With this research, we have explored several ideas in the context of geometric routing.
These include computing a sketch $P'$ of the input convex polytope $P$, selecting vertices as representatives via deterministic sampling by setting up a grid over $P'$, setting up a two-level hierarchy over the vertices of $P$ and designing respective routing schemes, computing a geodesic Steiner spanner on each face of $P'$, using that spanner to compute routing tables at the vertices of $P'$, enhancing those tables to build the corresponding routing tables at the sampled vertices of $P$, and the greedy routing scheme to ensure local routing.
To our knowledge, this is the first result for local routing on polytopes.
The following theorem summarizes our result:

\begin{theorem}
Given a convex polytope $P$ with $n$ vertices and an input parameter $\epsilon \in (0, 1)$, the preprocessing algorithm assigns a unique label of size $O((\lg{(\min(n, \frac{1}{\epsilon}))})^2)$ bits to each vertex of $P$ and it computes a routing table at every vertex of $P$ of amortized size $O((\min(n, \frac{1}{\epsilon^{3/2}}))\lg{n})$ bits in $O(n \min(n^2,$ $\frac{1}{\epsilon^7} \lg{n}))$ time so that any packet is routed along a path on $\partial P$ such that the length of the routing path from any vertex $s$ of $P$ to any vertex $t$ of $P$ is upper bounded by $\frac{8+\epsilon}{\sin{\theta_m}}(D+d(s, t))$, while the size of routing information in the header of any packet is $O((\lg{(\min(n, \frac{1}{\epsilon}))})^2)$ bits.
Here, parameter $D$ is the maximum length between any two points of any cell when $\partial P$ is partitioned into $\frac{1}{\epsilon^3}$ geodesic cells,
$\theta_m$ is half the minimum angle between any two edges of a triangle in the triangulation of $\partial P$, and
$d(s, t)$ is the shortest distance on $\partial P$ from $s$ to $t$.
\end{theorem}

\setcounter{theorem}{0}

Section~\ref{sect:preproc} details the preprocessing algorithm to compute the routing tables at all the vertices of the input convex polytope.
Section~\ref{sect:routing} explains the details of the local routing algorithm and analyzes the same.
Conclusions are in Section~\ref{sect:conclu}.

\section{Preprocessing Algorithm}
\label{sect:preproc}

The input consists of a convex polytope $P$ in $\mathbb{R}^3$ and a real number $\epsilon$ in $(0, 1)$.
The preprocessing algorithm starts by partitioning $\partial P$ into $\delta$-patches.
For any face $f$ of $P$, let $\theta_f^x$ (resp., $\theta_f^z$) be the angle between the normal $n_f$ to $f$ and the vector along the positive $x$-axis (resp., $z$-axis).
For $\delta > 0$, any {\it $\delta$-patch} of $\partial P$ is a maximal collection $\cal C$ of faces of $\partial P$ such that 
(i) $\bigcup_{f \in \cal C} f$ is a contiguous section of $\partial P$, and 
(ii) for any two faces $f_1, f_2 \in \cal C$, both $|\theta_{f_1}^x - \theta_{f_2}^x| \le \delta$ and $|\theta_{f_1}^z-\theta_{f_2}^z| \le \delta$.
Intuitively, each $\delta$-patch is approximately flat, and this approximation is up to the angle $\delta$.
The maximum number of $\delta$-patches of $\partial P$ is dependent on $\delta$, and it does not depend on $n$; the larger the $\delta$, the smaller the number of $\delta$-patches, and vice versa.
From Dudley~\cite{journals/apprxth/Dudley74}, the number of $\delta$-patches of any convex polytope is $O(\frac{1}{\delta^2})$.

Following the above definition of $\delta$-patch, for every pair of integers $i, j \in [1, \frac{2\pi}{\delta}]$, all the faces $f$ of $P$ with $\theta_f^x \in [(i-1) \delta, i \delta]$ and  $\theta_f^z \in [(j-1) \delta, j \delta]$ together form a $\delta$-patch.
Since $P$ is convex, $\theta_f^x$ (resp., $\theta_f^z$) angles of faces $f$ of $P$ are monotonic. 
Hence, any $\delta$-patch so formed is contiguous.
In any iteration, we pick an arbitrary face $f$ of $P$ that is not yet marked to belong to any $\delta$-patch, and find all the faces of $P$ that belong to the same patch as $f$ so that all those faces together form a $\delta$-patch. 
Let $G$ be the dual graph of $\partial P$ wherein each face of $\partial P$ has a unique node in $G$ and $G$ has no other nodes, and two nodes of $G$ are neighbours in $G$ whenever those two faces abut along an edge of $\partial P$.
The following greedy algorithm partitions $\partial P$ into $\delta$-patches: starting at an arbitrary node $v$ of $G$ which does not belong to any $\delta$-patch yet, by applying a breadth-first traversal sourced at $v$ to nodes of $G$ whose corresponding faces of $P$ are yet to be included in any $\delta$-patch, we compute the $\delta$-patch to which the face corresponding to $v$ belongs. 
Since any $\delta$-patch is contiguous, breadth-first traversal would be successful in outputting all faces belonging to a $\delta$-patch.
This algorithm takes $O(n+\frac{1}{\delta^2})$ time to partition $\partial P$ into a set of $O(\frac{1}{\delta^2})$ $\delta$-patches.
For a positive constant $c$, for every $i \in [1, \frac{c}{\delta^2}]$, we denote a $\delta$-patch of $\partial P$ by $\delta_i$.

\begin{lemma}
\label{lem:numpatches}
The $\partial P$ is partitioned into $O(\frac{1}{\delta^2})$ patches, and computing all the $\delta$-patches of $\partial P$ takes $O(n+\frac{1}{\delta^2})$ time.
\end{lemma}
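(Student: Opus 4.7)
The plan is to verify the two claims of the lemma separately: the $O(\tfrac{1}{\delta^2})$ upper bound on the number of $\delta$-patches, and the $O(n+\tfrac{1}{\delta^2})$ running time of the BFS-based partitioning algorithm described just above the statement.

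For the count, I would appeal to the Dudley-style argument already referenced in the preceding paragraph. Each face $f$ of $\partial P$ is associated to its pair of normal-direction angles $(\theta_f^x, \theta_f^z) \in [0,\pi] \times [0,\pi]$. Impose a regular grid of cells of side $\delta$ on this parameter square, yielding $O(\tfrac{1}{\delta^2})$ cells. By the very definition of a $\delta$-patch, the normal-angle pairs of all faces within one patch lie in a common window of side $\delta$, and hence in a constant number of adjacent grid cells. Because $P$ is convex, the Gauss map sends $\partial P$ onto the unit sphere, and the face-normals swept out by any contiguous collection of faces form a connected set in angle space; this restricts the number of distinct maximal contiguous collections that can be charged to a single cell to $O(1)$. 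Summing over cells gives $O(\tfrac{1}{\delta^2})$ patches, matching Dudley's bound.

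For the running time, the dual graph $G$ of $\partial P$ has $n$ nodes and $O(n)$ edges (each triangular face is incident to exactly three edges of $\partial P$, and each edge bounds exactly two faces), so a BFS visits each node and edge a constant number of times, contributing $O(n)$ work. At each visited node the algorithm only needs to compare its $(\theta^x,\theta^z)$ pair against the two reference angles of the current patch in $O(1)$ time to decide membership. Initializing the $O(\tfrac{1}{\delta^2})$ patch records (counters, representative faces, etc.) contributes an additive $O(\tfrac{1}{\delta^2})$. Summing gives the claimed $O(n+\tfrac{1}{\delta^2})$ bound. The $\tfrac{1}{\delta^2}$ summand is only binding when $n \ll \tfrac{1}{\delta^2}$, where the patch data structures dominate the face-level work.

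The main obstacle, in my view, is the first claim: arguing that the \emph{greedy} maximal-growth BFS does not produce asymptotically more patches than a best-possible partition would. The subtle point is that greedy growth could in principle drift, allowing two faces with nearly identical normals to end up in different patches and thereby inflate the count beyond $O(\tfrac{1}{\delta^2})$. Ruling this out requires invoking convexity of $P$ (so that the Gauss image is well-structured and contiguous sets of faces have contiguous normal ranges), which is exactly what Dudley's theorem encapsulates; the authors elide the detailed justification by citing it directly.
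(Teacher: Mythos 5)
Your proposal matches the paper's own justification, which likewise consists of citing Dudley for the $O(\frac{1}{\delta^2})$ bound on the number of patches and observing that a breadth-first traversal of the $O(n)$-size dual graph of $\partial P$ yields the $O(n+\frac{1}{\delta^2})$ running time. The subtlety you flag --- that the greedy maximal-growth BFS might in principle produce more maximal patches than a fixed grid in normal-direction space would --- is not addressed in the paper either; it is absorbed into the citation of Dudley exactly as you describe.
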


For every $\delta_i$, we pick an arbitrary face $f$ belonging to $\delta_i$.
Let $h_i$ be the half-space defined by the plane $\gamma_i$ containing $f$ such that $h_i$ contains $P$.
We call $f$ a {\it representative face} of $\delta_i$.
Then, $P' = \bigcap_i h_i$ is a convex polytope, and $P'$ contains $P$.
\begin{figure}[ht!]
\centering
\includegraphics[width=5cm]{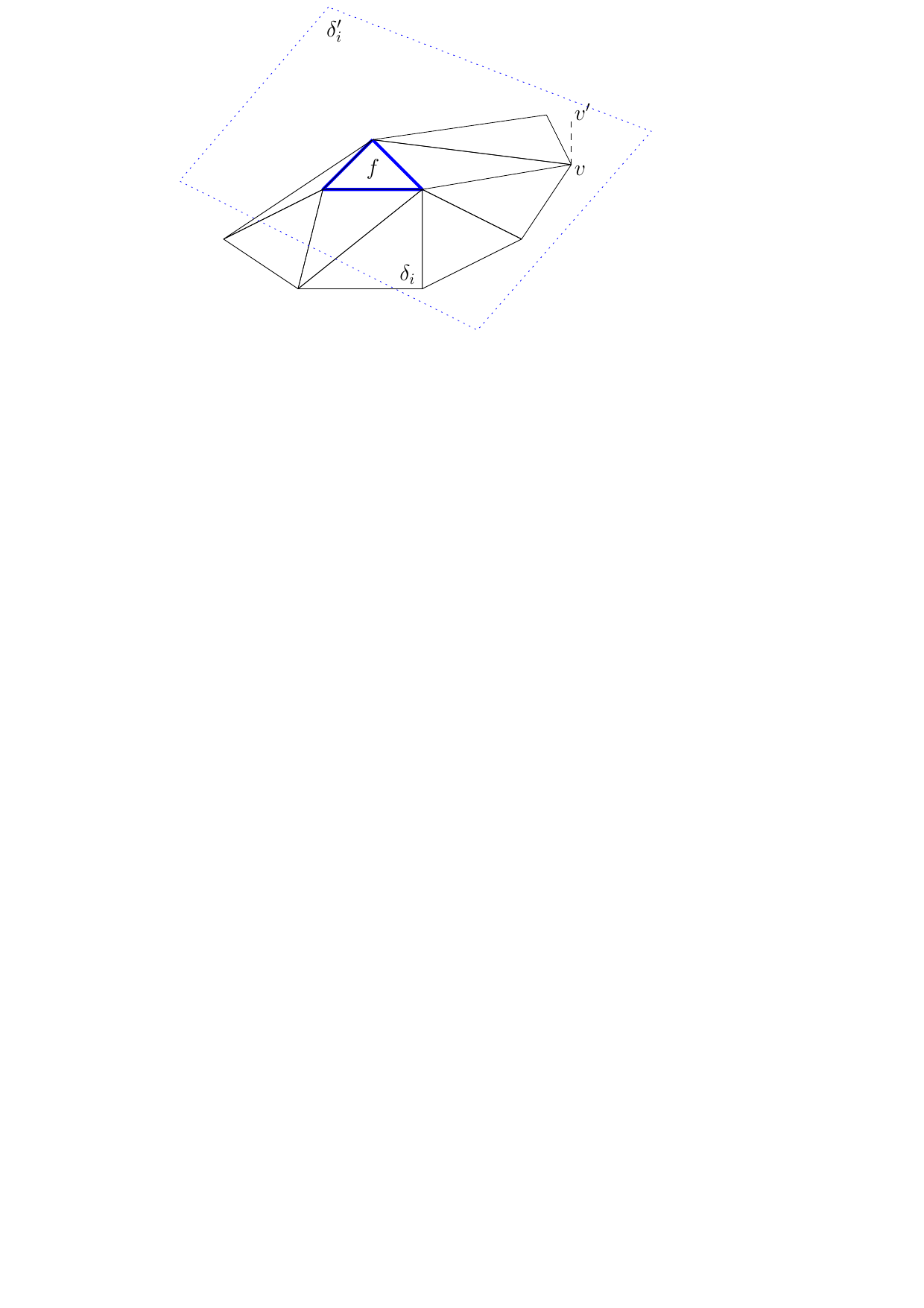}
\caption{\footnotesize 
Illustrates patch $\delta_i$ consisting of a contiguous section of faces from $P$, a representative face $f$ of $\delta_i$, and the plane containing $\delta_i'$.
Also illustrated the point of projection $v'$ on $\delta_i'$ for a vertex $v$ of $\delta_i$.
\normalsize}
\label{fig:patchrep}
\end{figure}
We do not explicitly compute $P'$, but $P'$ and its faces are useful in describing our algorithm.
The face of $P'$ that contains the representative face $f$ of $P$ is denoted by $\delta_i'$.
Let $V_i$ be the set comprising all the vertices of $P$ belonging to $\delta_i$.
For every $\delta_i$, every vertex $v \in V_i$ is projected on the plane containing $\delta_i'$ such that the line segment joining $v$ and the point of projection of $v$ on the plane containing $\delta_i'$ is orthogonal to $\delta_i'$.
Refer to Fig.~\ref{fig:patchrep}.

For every $\delta_i'$, we compute a geodesic spanner $G_i'$ on $\delta_i'$.
That is, every edge of $G_i'$ is located on $\delta_i'$.
Let $V_i'$ be the set comprising all points projected on $\delta_i'$.
To reduce dependency on $n$ in computing $G_i'$, we deterministically sample points in $V_i'$.
Let $\cal B$ be a smallest (rectangular) bounding box containing $\delta_i'$ in the plane containing $\delta_i'$.
We partition $\cal B$ into a set $\cal R$ of squares wherein each of these squares has area equal to the area of $\delta_i'$ multiplied by $\epsilon$.
Noting $\delta_i'$ is convex, for every $r \in \cal R$, if $r \cap \delta_i' \ne \emptyset$, then we call $r \cap  \delta_i'$ is a grid cell on $\delta_i'$.
This results in partitioning $\delta_i'$ into $\frac{1}{\epsilon}$ grid cells.
Then, every point in $V_i'$ belongs to a grid cell of $G_i'$, with ties broken arbitrarily.
The cells of grid $\delta_i'$ are indexed with $j$ where $j \in [1, \frac{1}{\epsilon}]$, and each grid cell is denoted by $g_{ij}'$.
Among all the projected points that belong to any grid cell $g_{ij}'$, an arbitrary point $r_{ij}'$ is chosen as the {\it representative of grid cell} $g_{ij}'$.
The vertex $r_{ij}$ that got projected to $r_{ij}'$ is called a {\it representative vertex} of $\delta_i$.
Refer to Fig.~\ref{fig:selreps}.
We denote the set comprising all the representative vertices on $\delta_i$ by $R_i$.
And the set $R_i'$ denotes the set comprising points resulting from projecting every point $p$ in $R_i$ on $\delta_i'$ such that the line joining the point of projection of $p$ on $\delta_i'$ and $p$ is orthogonal to $\delta_i'$.
Noting that a vertex $v$ of a $\delta$-patch could belong to more than one patch when $v$ is located on the boundary of that $\delta$-patch, $v$ being a representative or non-representative vertex is patch-specific.
However, we call a vertex $v$ of $P$ a representative vertex whenever $v$ is a representative vertex on at least one patch.
Otherwise, $v$ is a non-representative vertex.
The multiset $\bigcup_i R_i$ comprises all the representative vertices of $P$.
Later we set $\delta$ in $\delta$-patches to $\epsilon$.
Then, from Lemma~\ref{lem:numpatches}, the proof of the following is immediate.

\begin{figure}[ht!]
\centering
\includegraphics[width=5cm]{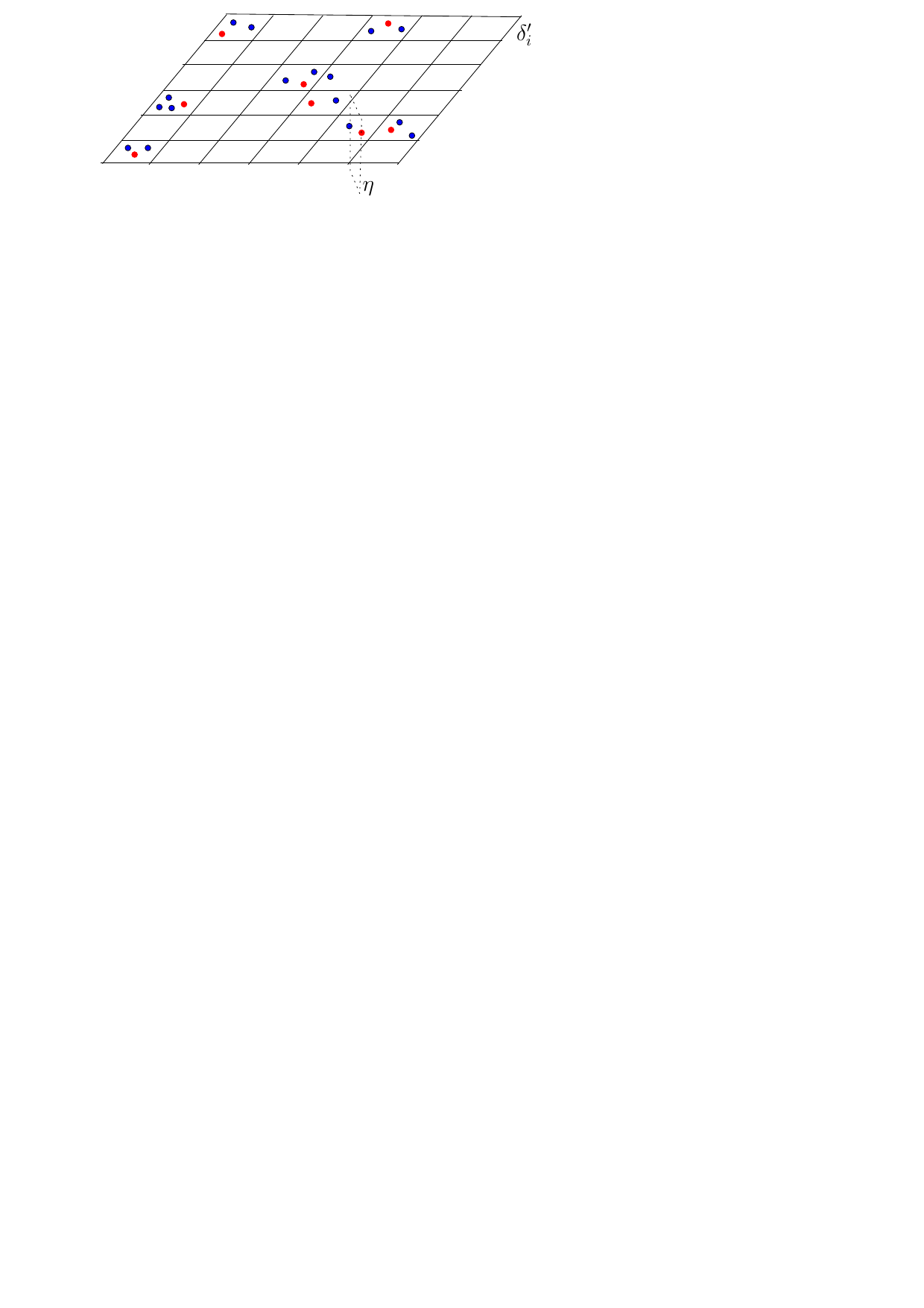}
\caption{\footnotesize 
Illustrates selecting one representative (red) per cell of a grid set up on $\delta_i'$.
Also illustrated is a plane $\eta$ orthogonal to $\delta_i'$ containing the representative (red) and a non-representative (blue) belonging to the same grid cell.
\normalsize}
\label{fig:selreps}
\end{figure}

\begin{lemma}
\label{lem:numrepcomp}
The total number of representative vertices of $P$ is $O(\min(n, \frac{1}{\epsilon^3}))$, and computing the set comprising all the representative vertices of $P$ takes $O(n+\frac{1}{\epsilon^3})$ time.
\end{lemma}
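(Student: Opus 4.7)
The plan is to split the statement into its two assertions---the cardinality bound and the running-time bound---and derive each one directly by combining Lemma~\ref{lem:numpatches} with the two-level (patch, grid) construction just described.

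For the cardinality, I would instantiate Lemma~\ref{lem:numpatches} at $\delta = \epsilon$ to obtain $O(\frac{1}{\epsilon^2})$ patches $\delta_i$. By construction, each corresponding face $\delta_i'$ is partitioned into exactly $\frac{1}{\epsilon}$ grid cells, and at most one representative vertex is produced per cell (namely the preimage under orthogonal projection of that cell's representative point, if any vertex of $\delta_i$ projects into the cell). Hence each patch contributes at most $\frac{1}{\epsilon}$ representatives, and summing over all patches gives $O(\frac{1}{\epsilon^3})$ in total. Since every representative is a distinct vertex of $P$, the trivial bound $|R|\le n$ also applies, so $|R|=O(\min(n,\frac{1}{\epsilon^3}))$.

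For the running time, I would proceed in four phases. Phase~1 invokes Lemma~\ref{lem:numpatches} to compute the patch decomposition in $O(n+\frac{1}{\epsilon^2})$ time, which as a by-product assigns each vertex of $P$ to a unique patch. Phase~2 fixes, for each $\delta_i$, an arbitrary representative face $f\in\delta_i$ and its supporting plane $\gamma_i$ in $O(1)$ time, then orthogonally projects every $v\in V_i$ onto $\delta_i'$ in $O(1)$ time per vertex; since the $V_i$'s partition $V$, this totals $O(n)$. Phase~3 installs the $\frac{1}{\epsilon}$-cell grid on each $\delta_i'$ in $O(\frac{1}{\epsilon})$ per patch, or $O(\frac{1}{\epsilon^3})$ overall. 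Phase~4 buckets projected vertices into cells and picks one representative per nonempty cell, which I will argue is $O(n+\frac{1}{\epsilon^3})$. Adding the four phases yields the claimed $O(n+\frac{1}{\epsilon^3})$ bound.

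The only genuinely delicate point is Phase~4: for the count in Phase~3 to translate into $O(1)$ point location per projected vertex, the grid on $\delta_i'$ must support constant-time cell lookup without an initial sort (otherwise an $O(n\log n)$ factor would creep in). I would resolve this by using, as the ``$\frac{1}{\epsilon}$ grid cells of approximately equal area'' mentioned in the construction, a uniform $\lceil 1/\sqrt{\epsilon}\rceil\times\lceil 1/\sqrt{\epsilon}\rceil$ axis-aligned grid over a bounding rectangle of $\delta_i'$ inside $\gamma_i$; each projected vertex is then mapped to its cell by truncating its two planar coordinates. Boundary cells of $\delta_i'$ may be clipped, but this affects neither the ``at most $\frac{1}{\epsilon}$ cells per patch'' upper bound used in the counting argument nor the $O(1)$ lookup. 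With this bucketing in place, sweeping through the buckets to pick one arbitrary representative per nonempty cell costs $O(n+\frac{1}{\epsilon^3})$, completing the proof.
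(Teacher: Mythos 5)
Your proposal is correct and follows the same route the paper intends: the paper simply declares the lemma ``immediate'' from Lemma~\ref{lem:numpatches}, and your write-up supplies exactly the omitted details --- $O(\frac{1}{\epsilon^2})$ patches times at most $\frac{1}{\epsilon}$ cells (hence representatives) per patch, plus the trivial bound $n$, together with the $O(n+\frac{1}{\epsilon^3})$ accounting for projection, grid setup, and constant-time bucketing. Your observation that the grid must support $O(1)$ cell lookup to avoid a sorting overhead is a reasonable refinement that the paper glosses over but does not change the approach.
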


For every $i$, for every two points $v_1', v_2' \in R_i'$ with their corresponding vertices $v_1, v_2$ located on $\delta_i$, in the entry corresponding to $v_2$ (resp., $v_1$) in the routing table of $v_1$ (resp., $v_2$), we store the plane orthogonal to $\delta_i'$ that contains both $v_1$ and $v_2$.
Let $v$ be any non-representative vertex on any patch $\delta_i$. 
Suppose the vertex $v'$ on $\delta_i'$ corresponding to $v$ on $\delta$ belongs to grid cell $g_{ij}'$.
Also, let $r_{ij}'$ be the representative of $g_{ij}'$ which resulted from projecting $r_{ij}$ on $\delta_i$ to $\delta_i'$.
Then, in the entry corresponding to $v$ in the routing table at $r_{ij}$, we store the plane $\eta$ orthogonal to $\delta_i'$ that contains both $v$ and $r_{ij}$.
Refer to Fig.~\ref{fig:selreps}.
Additionally, in the routing table at $v$, we store $\eta$ to help in reaching its representative $r_{ij}$.

Next, following Kapoor and Li~\cite{conf/isaac/KapoorL09}, for every $i$, we compute a \textTheta-graph based geodesic Steiner spanner $G_i'$ whose node set is a superset of $R_i'$, and $G_i'$ provides a spanner path between every two nodes of $G_i'$.
The node set of $G_i'$ is $R_i' \cup S_i'$, where $S_i'$ is the set of Steiner nodes introduced into $G_i'$ as the algorithm proceeds in computing $G_i'$.
Next, we describe the set $S_i'$ and how it is computed.
The Steiner nodes are useful in computing routing entries at the vertices of $P$, and we do not store routing tables at Steiner nodes.
The node set of $G_i'$ is initialized with $R_i'$.
Let $o$ be an arbitrary point located on $\delta_i'$.
Let $\cal C$ be a set of $\frac{2\pi}{\epsilon}$ cones, each with cone angle $\epsilon$, together partitioning $\delta_i'$ such that the apex of each of these cones is at $o$.
Also, let $S_j'$ be the set comprising Steiner nodes in the geodesic spanner computed on $\delta_j'$, wherein $\delta_j'$ abuts $\delta_i'$ along an edge of $P'$.
For every $p' \in R_i'$ and for every cone $C \in \cal C$, let $C_{p'}$ be the cone obtained by translating $C$ so that the apex of $C$ is at $p'$.
We introduce a Steiner point on the $\partial \delta_i' \cap C_{p'}$ that is closest to $p'$ as a Steiner node in $S_i'$ as well as in every $S_j'$.
By similar means, Steiner points could also be introduced on the boundary of $\delta_i'$ due to the application of the algorithm on the faces of $P'$ that are neighbours of $\delta_i'$.
In constructing the geodesic spanner $G_i'$ based on \textTheta-graphs, we include Steiner points from the neighbouring faces of $\delta_i'$ as well into $S_i'$.
These are the only Steiner points that we introduce into $G_i'$.
Since we rely on \textTheta-graph construction on each face of $P'$, the Steiner points introduced in this way suffice to yield an $(1+\epsilon)$-approximation of the geodesic distance between any two representative or Steiner points on the $\partial P'$.

Any line segment $\ell'$ on a face of $P'$ when projected to $\partial P$ is a path, a sequence of line segments on $\partial P$, and the length of this path is upper-bounded due to the patch subdivision of $\partial P$.
Since the endpoints of $\ell'$ result from geodesically projecting two vertices $x, y$ of $P$, projecting $\ell'$ to $\partial P$ yields a path with endpoints $x$ and $y$. 
Following Clarkson~\cite{conf/stoc/Clarkson87}, for every $i$, we introduce edges into $G_i'$:
For every $p' \in R_i' \cup S_i'$, we consider $C_{p'}$.
We project every point $p'' \in R_i' \cup S_i'$ that is located in $C_{p'}$ on the line $\ell_{C_{p'}}$ that bisects the cone angle of $C_{p'}$.
Among all such projected points on $\ell_{C_{p'}}$, let $q''$ be the point that is closest to $p'$.
Also, let $q'$ be the point in $C_{p'}$ whose projection on $\ell_{C_{p'}}$ resulted in $q''$.
Then, we introduce an edge between $p'$ and $q'$ into $G_i'$.
Note that the edges introduced in this way integrate all the Steiner points introduced into the geodesic spanners being computed.
The geodesic Steiner spanner $G'$ on $\partial P'$ is $\bigcup_i G_i'$, and $G'$ helps in building routing tables at the set $\bigcup_i R_i$ of representative vertices of $P$.

Given any input set $S'$ comprising $n'$ points in the plane and a real number $\epsilon \in (0, 1)$, the algorithm given in \cite{conf/stoc/Clarkson87} computes a \textTheta-graph of size $O(\frac{n'}{\epsilon})$ in $O(\frac{n'}{\epsilon} \lg{(\frac{n'}{\epsilon})})$ time so that the stretch of the resulting graph is $1/(\cos{\epsilon}-\sin{\epsilon}) \approx (1+\epsilon)$.
Noting $\lg{(\min(\frac{n}{\epsilon}, \frac{1}{\epsilon^4}))} = \lg{(\frac{1}{\epsilon})}$, the following lemma is immediate from both Lemma~\ref{lem:numrepcomp} and the result from \cite{conf/stoc/Clarkson87}.

\begin{lemma}
\label{lem:G'size}
The number of nodes in the geodesic spanner $G'(R' \cup S', E')$ is $O(\min(n, \frac{1}{\epsilon^3}))$ and the number of edges is $O(\min(\frac{n}{\epsilon}, \frac{1}{\epsilon^4}))$.
Given the representative points on $\partial P'$, computing $G'$ takes $O(\min(\frac{n}{\epsilon}, \frac{1}{\epsilon^4})\lg{(\frac{1}{\epsilon})})$ time.
For any two representative points $u', v'$ on $\partial P'$ with the length of the shortest path between $u'$ and $v'$ on $\partial P'$ being $d_{\partial P'}(u', v')$ and the length of the shortest path in $G'$ between $u'$ and $v'$ being $d_{G'}(u' v')$, $d_{\partial P'}(u', v') \le d_{G'}(u', v') \le (1+\epsilon) \cdot d_{\partial P'}(u', v')$.
\end{lemma}

We further process $G'$ to compute routing table entries needed to route packets between a representative vertex on $\delta_i$ to a representative vertex on $\delta_j$, for $i \ne j$.
By applying the preprocessing algorithm in Thorup and Zwick~\cite{conf/spaa/ThorupZwick01} to $G'$, we compute routing tables at the nodes of $G'$.
We consider every two representative vertices $u, v$ of $P$ respectively belonging to distinct patches $\delta_i$ and $\delta_j$ with their corresponding nodes $u', v' \in G'$. 
We first note that since there is no edge between $u'$ and $v'$ in $G'$, our routing algorithm cannot forward the packet from $u'$ to $v'$.
Hence, in routing a packet from $u'$ to $v'$ using the packet headers as in \cite{conf/spaa/ThorupZwick01}, suppose the packet gets forwarded to node $w'$ of $G'$ and $w$ corresponding to $w'$ is located on $\delta_i$. 
Then, in the routing table at $u$, for the entry corresponding to routing to $v$, we store a plane $\eta$ orthogonal to $\delta_i'$ while $\eta$ contains both $u'$ and $w'$. 
In the other case, wherein $w' = \delta_i' \cap \eta$ is a node in $G'$ that corresponds to a Steiner point on an edge $xy$ of $P$.
The vertices of each such edge are called {\it marked vertices} of $P$.
For every Steiner point $p$ on any edge of $\partial P$ with endpoints $x$ and $y$, we mark both $x$ and $y$ so that these two vertices take the responsibility of forwarding the packets reaching $p$ further.
Both at $x$ and at $y$, we store the plane $\beta$ orthogonal to $\delta_j'$, where edge $xy \in \delta_j \cap \delta_i$ and the packet got routed to $p$ along a path on $\delta_i$ that passes through $p$. 
As explained earlier, the plane $\beta$ saved at $x$ (resp., $y$) helps guide the packet to another Steiner point on $\delta_j$ or to a vertex of $\delta_j$.
This maintains the invariant that successive vertices along any routing path belong to the same patch of $P$.
The sequence of these planes together defines an approximate shortest path on $\partial P$ between any two representative vertices $s$ and $t$ of $P$, with some of the intermediate nodes being points on $\partial P$ corresponding to Steiner nodes of $G'$.
This path is further approximated by a routing path that passes through one of the marked nodes corresponding to each Steiner point of $P$ along it. 
At the end of computing the routing tables at the nodes of $P$, there is no use of $G'$, and hence the data structures augmented to nodes of $G'$, including the ones due to \cite{conf/spaa/ThorupZwick01}, are destroyed. 

Now we describe routing tables at the end of the preprocessing phase.
The vertex set of $P$ is partitioned into a set of representative vertices and a set of non-representative vertices.
A vertex $v$ is a representative vertex of $P$ if the projection $v'$ of $v$ on $\partial P'$ is chosen as the representative of a grid cell located on a face of $P'$.
Otherwise, $v$ is a non-representative vertex.
Each non-representative vertex $v$ of $P$ is associated with a representative vertex $r_v$ of $P$ such that their projections $v'$ and $r_v'$ on $\partial P'$ are on the same face $f'$ of $P'$ and belong to the same grid cell of $f'$.
This does not necessarily ensure $v$ and $r_v$ belong to the same face $f$ of $P$; however, both $v$ and $r_v$ belong to the same patch of $P$.
Hence, for any packet at $v$ whose destination is not $v$, we simply forward the packet to $r_v$.
That is, each non-representative vertex $v$ of $P$ can only forward any packet at $v$ to its representative $r_v$.
To facilitate this, two rays each originating at $v$ are stored at $v$ so that these two rays together define a plane orthogonal to $f'$ that contains $v$.
Analogously, to route from any representative vertex $r_v$ of $P$ to any vertex that belongs to the set $W$ comprising non-representative vertices of $P$ whose representative is $r_v$, we store two rays corresponding to each node $v \in W$ in the routing table at $r_v$.
These two rays help in routing the packet from $r_v$ to $v$.
To route between any two representative vertices $v$ and $w$ that belong to the same patch of $P$, again we store two rays at $v$ (resp., $w$) which together help in routing the packet from $v$ (resp., $w$) to $w$ (resp., $v$).
In either of these cases, the plane defined by these two rays intersects a contiguous sequence $\cal F$ of faces belonging to a patch.
As will be explained in the routing algorithm, the routing path passes through a subset of vertices of faces in ${\cal F}$.
In the last case, a routing path is determined from a representative vertex $v$ to another representative vertex $w$, and these two vertices belong to distinct patches of $P$.
Such a path could intersect multiple patches and pass through Steiner points on the edges of $P$.
To facilitate passing a packet via a Steiner point located on any edge $xy$, as described above, we take the help of routing tables at the marked vertices $x$ and $y$.

In addition, for every two faces $f$ and $g$ sharing an edge $xy$ with the vertices of $f$ being $u, x, y$ and the vertices of $g$ being $v, x, y$, we store both the coordinates and the label of $v$ (resp., $u$) at $u$ (resp., $v$).
Note that $u, x, y$ together define the face $g$ that shares an edge $e$ opposite to $u$.
The routing algorithm details the need to save this information.
Though $v$ is not a neighbour to $u$ on the $\partial P$, we consider $v$ as local to $u$ since it is a $2$-neighbor to $u$, that is, there is a path between $u$ and $v$ along $\partial P$ with one intermediate vertex of $P$.
The routing path computed in the routing phase is such that every two successive vertices along that path are the vertices of an edge of $P$.

\begin{lemma}
\label{lem:amsizetable}
The amortized size of the routing table at any vertex of $P$ is $O((\min(n,$ $\frac{1}{\epsilon^{3/2}})) \lg{n})$ bits.
\end{lemma}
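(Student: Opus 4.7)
The plan is to classify the routing entries by the role the hosting vertex plays, bound the total number of entries over all vertices of $P$, and then divide by $n$ to obtain the amortized per-vertex bound. I would distinguish four kinds of entries: (i) the single entry at every non-representative vertex $v$ that points to its hub $r_{ij}$; (ii) one entry at every representative $r_{ij}$ per non-representative in its hub set $S_{ij}$; (iii) one entry at every representative $r_{ij}$ per other representative on the same patch $\delta_i$; and (iv) the Thorup--Zwick entries inherited at each representative from the spanner $G'$ for cross-patch routing, together with the corresponding entries at the marked vertices of $P$ that service the Steiner points.

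For (i) and (ii), each non-representative vertex contributes exactly one entry at itself and appears in exactly one hub set $S_{ij}$, so across both classes the total is $O(n)$ entries. For (iii), by Lemma~\ref{lem:numrepcomp} there are $R = O(\min(n, 1/\epsilon^3))$ representatives in all, and each stores at most $\min(n, 1/\epsilon)$ same-patch entries, for a grand total of $O(\min(n,1/\epsilon^3)\cdot\min(n,1/\epsilon))$ entries. For (iv), applying the Thorup--Zwick routing scheme~\cite{conf/spaa/ThorupZwick01} to $G'$ and invoking Lemma~\ref{lem:G'size} yields a per-representative table of $O(\sqrt{R}) = O(\min(\sqrt{n}, 1/\epsilon^{3/2}))$ entries and hence a total of $O(R^{3/2}) = O(\min(n^{3/2}, 1/\epsilon^{9/2}))$ entries. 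Each entry stores a constant number of coordinate triples together with a routing label, each of $O(\log(\min(n,1/\epsilon)))$ bits, a polylogarithmic cost that is absorbed into the $O(\cdot)$.

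Summing and dividing by $n$, classes (i) and (ii) amortize to $O(1)$, class (iii) amortizes to $O(\min(1/\epsilon, 1/(n\epsilon^{4})))$, and class (iv) amortizes to $O(\min(\sqrt{n}, 1/(n\epsilon^{9/2})))$. To finish I would show that each of these is $O(\min(n, 1/\epsilon^{3/2}))$ by a short case split on whether $n \le 1/\epsilon^3$: in the dense regime $n \le 1/\epsilon^3$ we have $\sqrt{n} \le 1/\epsilon^{3/2}$, so (iv) gives $O(\sqrt{n}) \le O(\min(n, 1/\epsilon^{3/2}))$; in the sparse regime $n > 1/\epsilon^3$ we have $1/(n\epsilon^{9/2}) < 1/\epsilon^{3/2}$, and an analogous comparison handles (iii). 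The main obstacle will be this case analysis, and in particular the intermediate regime $1/\epsilon \le n \le 1/\epsilon^3$, where one must use the bound $R \le 1/\epsilon^3$ for the Thorup--Zwick total but the bound $R \le n$ for the per-patch count, matching each regime against the correct branch of the $\min$; everything else is a direct substitution.
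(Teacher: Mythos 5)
Your proposal is correct and follows essentially the same route as the paper: classify the routing entries by the role of the hosting vertex, use the $O(\min(n,\frac{1}{\epsilon^3}))$ bound on the number of representatives and the $O(\sqrt{\cdot})$ Thorup--Zwick table size, and amortize the totals over the $n$ vertices. Your accounting is in fact somewhat more explicit than the paper's (you separately bound the same-patch representative-to-representative entries and carry out the regime case analysis on $n$ versus $\frac{1}{\epsilon}$ and $\frac{1}{\epsilon^3}$, which the paper's proof leaves implicit), but the underlying argument is the same.
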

\begin{proof}
As detailed above, since each non-representative vertex $v$ is associated with exactly one representative $r_v$, the number of entries required to store the representative at any non-representative vertex of $P$ is $O(1)$.
Each such entry is of size $O(\lg{n})$, as this involves storing the label of $r_v$, the coordinates of $r_v$, and the coordinates of a plane.
Further, each representative vertex $r_v$ stores only routing entries corresponding to vertices of $P$ that belong to the grid cell on which $r_v$ is located.
Again, any such entry is of size $O(\lg{n})$ bits.
There are $O(\min(n, \frac{1}{\epsilon^3}))$ representative vertices in total.
At every representative vertex $v$ of $P$, every neighbour $w$ of $v$ on $\partial P$ and its projection $w'$ are saved with $v$.
Due to routing algorithm in \cite{conf/spaa/ThorupZwick01}, the routing table computed at every vertex of $P$ is of size $O(\min(n, \frac{1}{\epsilon^{3/2}}) \lg{n})$ bits.
Each Steiner point on the boundary of any two patches that abut causes two marked vertices of $P$, and the coordinates of a plane are stored at each such marked vertex corresponding to a Steiner point.
Considering the sizes of routing tables due to \cite{conf/spaa/ThorupZwick01}, the number of Steiner points caused by any vertex of $P$ is $O(\min(n,$ $\frac{1}{\epsilon^{3/2}}))$.
In addition, at every vertex $v$ of $P$, for every face $f$ incident to $v$ with edge $xy$ shared between $f$ and another face $g$ of $P$ wherein $g$ is defined with vertices $w, x,$ and $y$, using $O(\lg{n})$ amortized space, we store the coordinates and the label of $w$ in the routing table at $v$.
Further, we note that none of the routing tables stores any information related to \textTheta-graphs.
\end{proof}

\begin{lemma}
\label{lem:preprtime}
The time to preprocess $P$ is $O(n \min(n^2, \frac{1}{\epsilon^7} \lg{n}))$.
\end{lemma}
\begin{proof}
We consider the time required for each step of the preprocessing phase.
By taking $\delta$ equal to $\epsilon$, from Lemma~\ref{lem:numpatches}, the time to partition $\partial P$ into $\epsilon$-patches takes $O(n+\frac{1}{\epsilon^2})$ time.
Projecting any vertex of a patch $\delta_i$ on $\delta_i'$ takes $O(1)$ time.
From Lemma~\ref{lem:numrepcomp}, the time for setting up grid cells over the faces of $P'$ and partitioning the vertex set of $P$ into representatives and non-representatives takes $O(n+\frac{1}{\epsilon^3})$ time.
For every $i$, computing a geodesic spanner on the face $\delta_i'$ of $P'$, including the computations of \textTheta-graphs together takes $O(\min(\frac{n}{\epsilon}, \frac{1}{\epsilon^4})\lg{(\frac{1}{\epsilon})})$ time.
The computation involved in finding planes containing a non-representative vertex and its corresponding representative vertex, and vice versa, together takes $O(n)$ time.
The number of marked nodes whose routing tables need to be populated with Steiner points on the boundaries of patches is upper bounded by the \textTheta-graph computations.

Given a graph $T'$ with $n'$ nodes and $m'$ edges, the algorithm in \cite{conf/spaa/ThorupZwick01} computes a routing table of size $O(\sqrt{n'})$ at each node of $T'$ in $O(n'm')$ time.
Hence, due to Lemma~\ref{lem:G'size}, due to the routing algorithm from \cite{conf/spaa/ThorupZwick01}, and since we consider every pair of representative vertices, the time to compute entries of all the routing tables at all the representative and marked vertices of $P$ together takes $O(\min(n^3, \frac{1}{\epsilon^7}))$ time.

At every vertex $v$, for every neighbour $w$ of $v$, saving the identifiers of $w$ and $w'$ takes $O(\lg{n})$ time.
Since any vertex label is of size $O(\lg{n})$ and since each routing table entry has $O(1)$ such labels and coordinates, considering Lemma~\ref{lem:amsizetable}, the stated time complexity includes $O(\lg{n})$ time to write each of those routing table entries.
For every vertex $v$ of $P$ and for every face $f$ incident to $v$ with edge $xy$ shared between $f$ and another face $g$ of $P$, wherein $g$ is defined with vertices $w, x,$ and $y$, finding $w$ and storing both of its coordinates and the label together takes $O(\lg{n})$ time. 
As a whole, the number of entries in all the routing tables and the sizes of those entries together dominate the time complexity.
\end{proof}

\section{Routing Algorithm}
\label{sect:routing}
Given a source vertex $s$ of $P$ and a destination vertex $t$ of $P$, with the help of routing tables computed at the vertices of $P$, the routing algorithm described in this section routes a packet from $s$ to $t$.
As detailed, the routing algorithm needs to handle routing in the following cases: 
routing from a non-representative vertex to its representative, 
routing from a representative vertex $r$ to a vertex whose representative is $r$,
routing between two representative vertices while both of them belong to the same patch, and
routing between two representative vertices while the patches to which they belong are distinct.
Since in the first three cases routing is between two vertices that belong to the same patch, there are essentially two cases: routing from $s$ to $t$ when both $s$ and $t$ are located on the same patch, and routing from $s$ to $t$ when $s$ and $t$ are located on two distinct patches, say $\delta_i$ and $\delta_j$ respectively.

Any vertex $v_1$ forwards the packet to another vertex $v_2$ only when there is an edge of $P$ with endpoints $v_1$ and $v_2$.
Considering this, given two vertices of $P$ and a geodesic path $\pi(s, t)$ from $s$ to $t$ located on $\partial P$, we propose a greedy algorithm that routes from $s$ to $t$ along a {\it routing path} $\pi_r(s, t)$ such that every two successive vertices along the routing path are joined by a triangulation edge of $\partial P$.
Note that any two successive intermediate vertices in $\pi(s, t)$ need not be joined by a triangulation edge of $P$, whereas in $\pi_r(s, t)$ they have to be connected.
We call a vertex $v$ of $P$ a {\it pseudo-destination} of a packet whenever the packet is at some other vertex $u$ of $P$, the routing algorithm decides to route it along the edges of $P$ to $v$ first, as part of routing the packet to its destination $t$.
Once the packet reaches $v$, the routing tables at $v$ determine how to route it to $t$.
Until the packet reaches $v$, vertex $v$ is stored as a pseudo-destination in the packet header.
Two vectors originating at $v$ are stored in the routing table at $v$, which together define a plane $\eta$, and the intersection of $\eta$ with $\partial P$ determines the route of the packet from $v$ onwards. 

Indeed, the concatenation of the ordered sequence of $\pi(v_i, v_{i+1})$ for $i = 0$ to $k-1$, with $v_0 = s$ and $v_k = t$, is $\pi(s, t)$.
Naturally, $|\pi(s, t)| = \sum_{i=0}^{k-1}$ $|\pi(v_i, v_{i+1})|$.
Each of these subpaths is approximated by a routing path; that is, for every $i$, there is a $\pi_r(v_i, v_{i+1})$ corresponding to $\pi(v_i, v_{i+1})$, 
The stretch factor due to our routing algorithm is an upper bound on the worst-case ratio of $|\pi_r(s, t)|$ to $|\pi_{opt}(s, t)|$, where $\pi_{opt}(s, t)$ is an optimal shortest path between $s$ and $t$ on $\partial P$.
Essentially, the objective of the local routing algorithm is to route the packet from one pseudo-destination (or source of the packet) to another pseudo-destination until the packet arrives at its destination.

\pagebreak

\subsection*{\bf Greedy local routing algorithm}

Next, we describe the greedy routing algorithm.
As mentioned, for any two points $p$ and $q$ located on any patch, no packet is forwarded from $p$ to $q$ (or, vice versa) if $p$ and $q$ are not neighbours on $\partial P$.
That is, $p$ can forward a packet to $q$ only if there is an edge on $\partial P$ between $p$ and $q$.
Hence, to route a packet from vertex $p$ to vertex $q$, both located on a patch $\delta_i$, if there is no edge joining $p$ and $q$ on $\partial P$, the routing algorithm needs to find a route on $\delta_i$ such that any two successive vertices along that path are endpoints of an edge of $\delta_i$.
In the packet header, we also save the last vertex visited by the packet.
This is needed for the routing algorithm as detailed below.

\begin{figure}[ht]
\centering
\includegraphics[width=5cm]{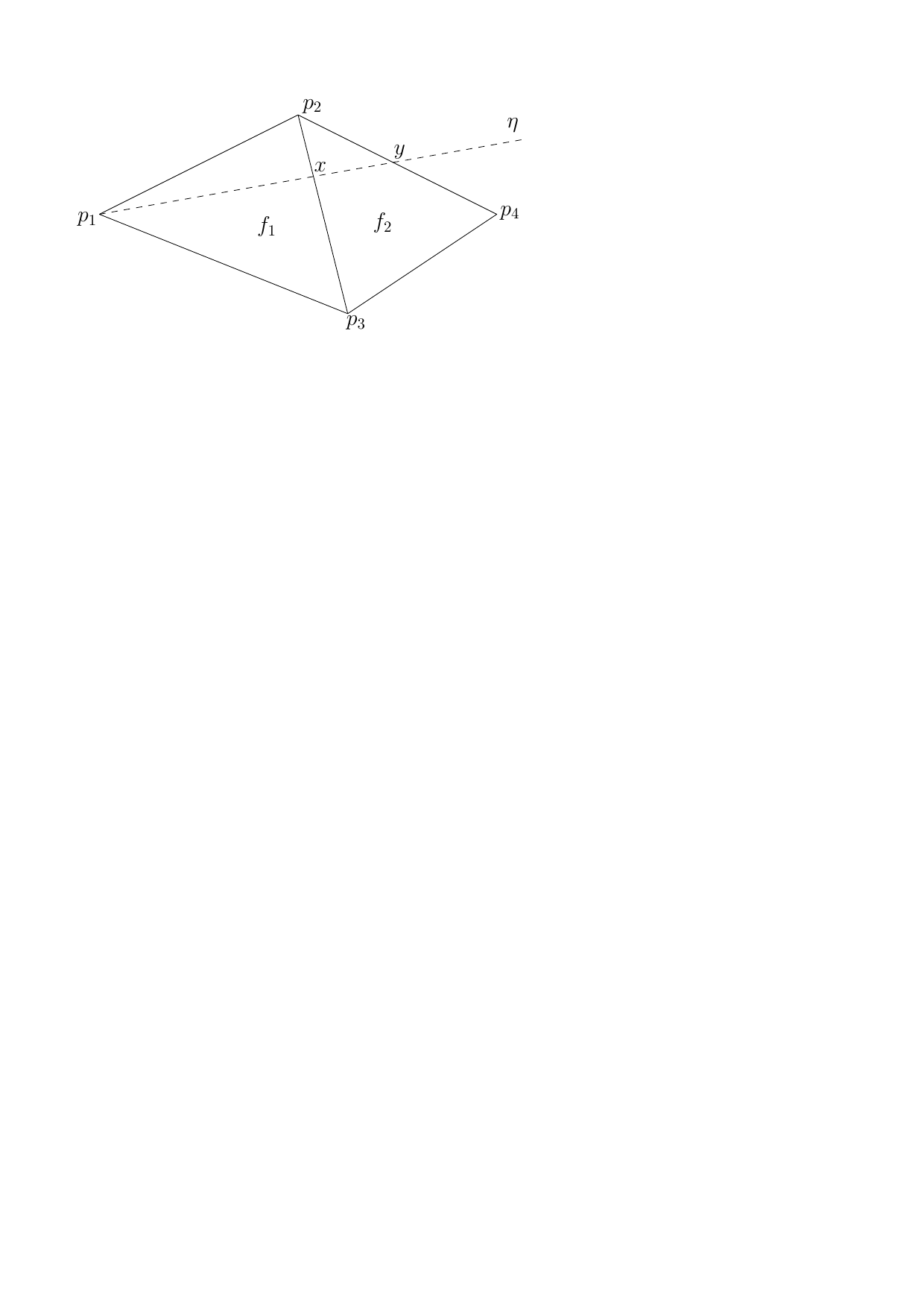}
\caption{\footnotesize 
Illustrates how the next hop for the packet is determined at vertex $p_1$ when the packet is at $p_1$.
In the case shown, since the plane $\eta$ intersects edge $p_2p_4$ of $f_2$, the packet at $p_1$ is forwarded to $p_2$.
\normalsize}
\label{fig:locgreedy1}
\end{figure}

Suppose a packet arrives at a vertex $p_1$ located on patch $\delta_i$ of $P$.
We describe how the packet is routed from $p_1$ to its pseudo-destination.
Based on the pseudo-destination stored in the packet header, the routing algorithm finds the plane $\eta$ from the routing tables at $p_1$.
Let $r_1$ and $r_2$ be the rays originating at $p_1$ which together define $\eta$.
As detailed in the preprocessing phase, $\eta$ is orthogonal to $\delta_i'$, and the pseudo-destination is located on $\delta_i$.
With a binary search over the faces incident to $p_1$, we determine the face $f_1$ intersected by $\eta$ such that the projection of $r_1$ is incident on $f_1$.
Let $p_2$ and $p_3$ be the other two vertices of $P$ defining $f_1$ such that $p_1, p_2, p_3$ occur in that order when the boundary of $f_1$ is traversed in a clockwise direction.
If the point $x = \partial f_1 \cap \eta$ is a vertex of $f_1$, then the routing algorithm forwards the packet to that vertex.
Otherwise, $x$ is a point located on edge $p_2p_3$ of $f_1$.
Refer to Fig.~\ref{fig:locgreedy1}.
Let $f_2$ be a face with vertices $p_2, p_3, p_4$ such that $p_4, p_3, p_2$ occur in that order when the boundary of $f_2$ is traversed in a clockwise direction.
Now there are three cases to consider.
The intersection of $\eta$ with $f_2$ is a point $y$ located on edge $p_2p_4$ or on edge $p_3p_4$, or $y$ is vertex $p_4$ itself.
If $y$ is located on $p_2p_4$ (resp., $p_3p_4$), then the packet at $p_1$ is forwarded to $p_2$ (resp., $p_3$).
If $y$ is $p_4$ then the packet at $p_1$ is forwarded to $p_2$ if $|p_1p_2|+|p_2p_4| < |p_1p_3|+|p_3p_4|$; otherwise, the packet at $p_1$ is forwarded to $p_3$.
Essentially, the routing algorithm decides the next hop from $p_1$ only after knowing the intersection of $\eta$ with a face $f_2$ that is sharing an edge $p_2p_3$ of a face $f_1$ incident to $p_1$, while edge $p_2p_3$ is opposite to $p_1$ in face $f_1$.
Hence, as mentioned in the preprocessing algorithm, at every vertex $v$ of $P$, the preprocessing algorithm also saves the face that shares an edge $e$ opposite to $v$, wherein $e$ is an edge of a face incident to $v$.

\begin{figure}[ht]
\centering
\includegraphics[width=4cm]{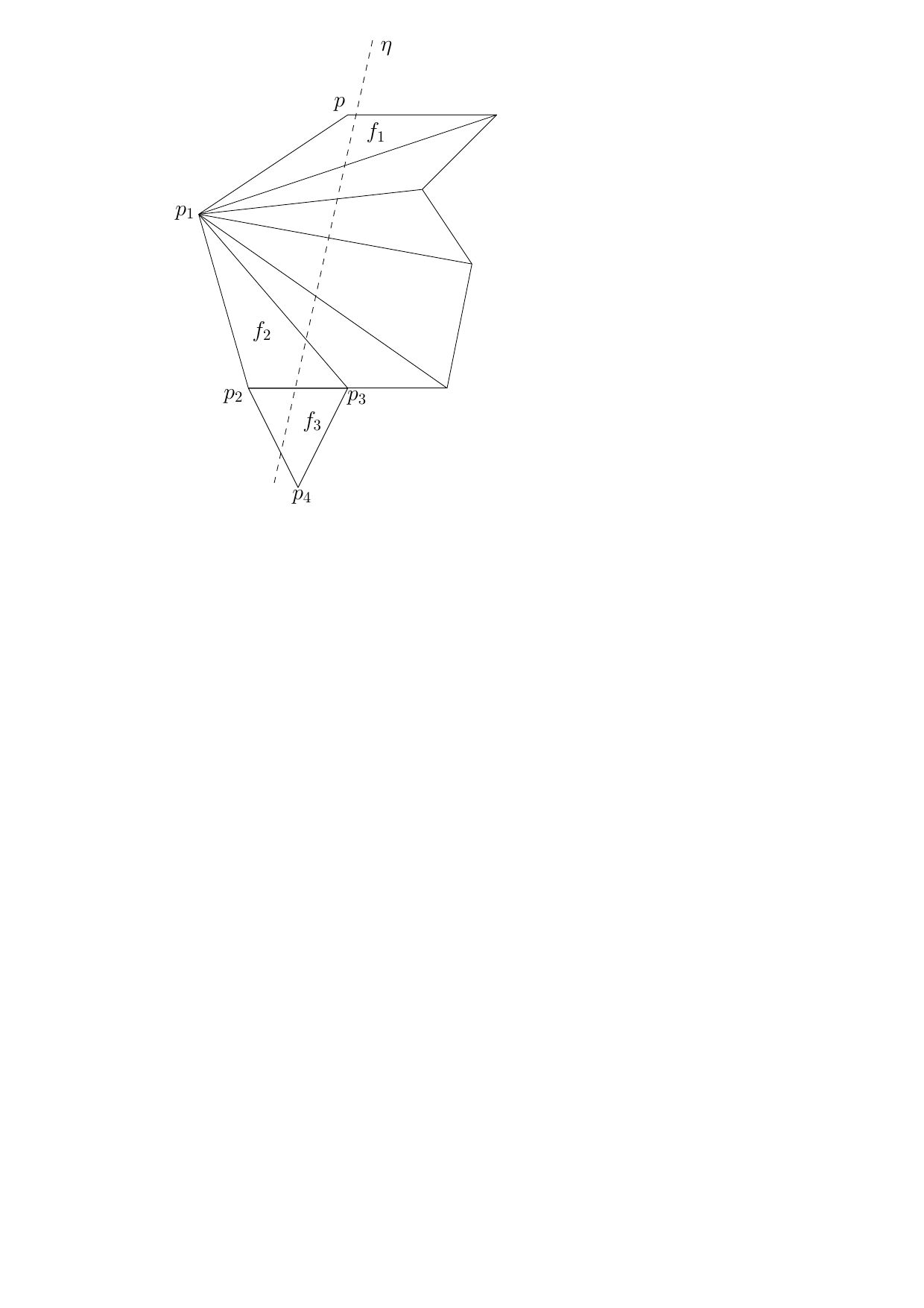}
\caption{\footnotesize 
Illustrates how the next hop for the packet is determined at $p_1$ when the packet is at vertex $p_1$.
Among all the faces incident to $p_1$, in the sequence of faces intersected by $\eta$, faces $f_1$ and $f_2$ are the first and last ones.
Given the packet was at $p$, since $\eta$ intersects $p_2p_4$ of $f_3$ in the case depicted, the packet is forwarded to $p_2$.
\normalsize}
\label{fig:locgreedy2}
\end{figure}

Consider the general case in which the packet needs to be forwarded from a vertex $p_1$ of $P$.
With a binary search over faces incident to $p_1$, we find two faces $f_1, f_2$ incident to $p_1$ such that $f_1$ (resp., $f_2$) has an intersection with $\eta$, and the face neighbouring $f_1$ (resp., $f_2$) that is incident to $p_1$ has no intersection with $\eta$.
Refer to Fig.~\ref{fig:locgreedy2}.
Let $p$ be the previous vertex at which the packet resided.
The identifier of $p$ is saved in the packet header before forwarding the packet to $p_1$.
Given $p$, we determine which of the two candidate faces, $f_1$ or $f_2$, led to $p_1$ having the packet.
That is, whether $p$ is a vertex of $f_1$ or $f_2$.
Without loss of generality, suppose $p$ is a vertex of $f_1$.
Then, the packet is routed to one of the endpoints of the edge of $f_2$ that is opposite to $p_1$.
Let $p_1, p_2, p_3$ be the vertices of $f_2$ such that $p_1, p_2, p_3$ occur in that order when the boundary of $f_2$ is traversed in a counterclockwise direction.
Then, the packet is forwarded from $p_1$ to either $p_2$ or $p_3$.
Let $f_3$ be the face such that edge $p_2p_3$ is on the boundary of both the faces $f_2$ and $f_3$.
Also, let $p_2, p_3, p_4$ be the vertices of $f_3$ such that $p_2, p_3, p_4$ occur in that order when the boundary of $f_3$ is traversed in a clockwise direction.
Analogous to the earlier case, the algorithm determines which one among these two vertices receives the packet, depending on whether $\eta$ intersects $rint(p_2p_4)$, $rint(p_3p_4)$, or the vertex $p_4$.

The header of any packet includes the destination and the pseudo-destination labels.
Let a packet be routed from vertex $p$ located on $\delta_i$ to pseudo-destination $q$ located on $\delta_i$.
Then, two vectors defining the plane $\eta$ orthogonal to $\delta_i'$ containing both $p$ and $q$ are stored in the packet header.
The direction encoded in one of these two vectors guides the packet towards $q$ at each intermediate vertex, eventually reaching a vertex $v'$ bounding the face of $P$ containing $q$.
The line segment $pq$ is the result of projecting an edge of the spanner $G'$ on $\partial P$.
Once the packet reaches $v'$, the routing tables at $v'$ determine both the next pseudo-destination and the corresponding plane, and these fields in the packet header are set accordingly.

\subsection*{\bf Stretch of routing path}

First, we state a few lemmas that help in proving the stretch of the routing path produced by the algorithm.

The proof of the following is due to elementary trigonometry, and this lemma is useful in proving propositions mentioned later.

\begin{lemma}
\label{lem:triangineq}
In any triangle $ABC$, $|AB|+|BC| \leq \frac{|AC|}{\sin{(\frac{\angle{ABC}}{2})}}$.
\end{lemma}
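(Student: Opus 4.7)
The plan is to invoke the law of cosines on triangle $ABC$ and convert the inequality into a manifestly nonnegative quantity via the half-angle identity $\cos B = 1 - 2\sin^2(B/2)$. Writing $a = |AB|$, $b = |BC|$, $c = |AC|$, and $\beta = \angle ABC$, the law of cosines combined with the half-angle identity gives
\[
c^2 \;=\; a^2 + b^2 - 2ab\cos\beta \;=\; (a-b)^2 + 4ab\sin^2(\beta/2).
\]
The claim is equivalent to $c^2 \geq (a+b)^2 \sin^2(\beta/2)$. Substituting the identity above and using $(a+b)^2 = (a-b)^2 + 4ab$ reduces this to
\[
(a-b)^2\bigl(1 - \sin^2(\beta/2)\bigr) \;\geq\; 0,
\]
which holds because $\sin^2(\beta/2) \leq 1$. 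Taking square roots finishes the proof.

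An equally clean geometric alternative is to draw the bisector $\ell$ of $\angle B$ and drop perpendiculars from $A$ and $C$ to $\ell$, with feet $A'$ and $C'$. By definition of the bisector we have $|AA'| = a\sin(\beta/2)$ and $|CC'| = b\sin(\beta/2)$, and since the bisector meets segment $AC$ in its interior, $A$ and $C$ lie on opposite sides of $\ell$. Hence the component of $\vec{AC}$ perpendicular to $\ell$ is exactly $|AA'|+|CC'| = (a+b)\sin(\beta/2)$, and this component is at most $|AC|$, giving the bound directly.

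There is no real obstacle here: the only choice to make is algebraic versus geometric. I will use the algebraic derivation in the write-up because it is a single identity followed by a one-line reduction to the trivial inequality $(a-b)^2 \geq \sin^2(\beta/2)(a-b)^2$, and it avoids having to remark separately that the bisector from $B$ crosses the opposite side.
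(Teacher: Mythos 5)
Your proof is correct, but it reaches the inequality by a different route than the paper. The paper writes all three sides via the extended law of sines, $|AB|+|BC| = 2R(\sin C + \sin A) = 4R\cos(\tfrac{B}{2})\cos(\tfrac{A-C}{2})$, drops the factor $\cos(\tfrac{A-C}{2}) \le 1$, and then rewrites $4R\cos(\tfrac{B}{2})$ as $\frac{2R\sin B}{\sin(B/2)} = \frac{|AC|}{\sin(B/2)}$; the key identity there is the sum-to-product formula together with $A+C = \pi - B$. You instead square the claim, apply the law of cosines with the half-angle substitution $\cos\beta = 1 - 2\sin^2(\beta/2)$, and reduce everything to $(a-b)^2\bigl(1-\sin^2(\beta/2)\bigr) \ge 0$. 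Both arguments are one-identity-deep and equally elementary; yours makes the equality case ($|AB| = |BC|$) visible at a glance and avoids introducing the circumradius, while the paper's version never needs to square and so sidesteps the (minor) remark that both sides of the squared inequality are nonnegative. Your geometric alternative --- projecting $A$ and $C$ onto the internal bisector of $\angle B$ and noting the perpendicular components add because the bisector separates $A$ from $C$ --- is also sound and is arguably the most transparent of the three; any of them would serve the paper's purpose, since Lemma~\ref{lem:routpathstretch} only uses the stated bound as a black box.
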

\begin{proof}
Let $R$ be the circumradius of triangle $ABC$.
Then, 
\begin{flalign*}
|AB|+|BC| &\le 2R((\sin{\angle{BCA}})+(\sin{\angle{CAB}})) &&\\
&= 4R \cos{(\frac{\angle{ABC}}{2})} \cos{(\frac{\angle{CAB}-\angle{BCA}}{2})} &&\\
&\le 4R \cos{(\frac{\angle{ABC}}{2})}
\end{flalign*}

\begin{flalign*}
\hspace{0.8in}
&= \frac{(2R)(2\sin{(\frac{\angle{ABC}}{2})} \cos{(\frac{\angle{ABC}}{2})})}{\sin{(\frac{\angle{ABC}}{2})}} &&\\
&= \frac{2R\sin{\angle{ABC}}}{\sin{(\frac{\angle{ABC}}{2})}}  &&\\
&= \frac{|AC|}{\sin{(\frac{\angle{ABC}}{2})}}.
\end{flalign*} 
Note the first inequality is due to $\frac{|AB|}{\sin{\angle{BCA}}} =  \frac{|BC|}{\sin{\angle{CAB}}} = 2R$.
\end{proof}

The following lemma, stated in \cite{conf/isaac/KapoorL09}, relates the Euclidean distance between any two points $p$ and $q$ located on any $\delta$-patch to the geodesic distance between $p$ and $q$.
\begin{lemma}
\cite{conf/isaac/KapoorL09}
\label{lem:euclgeoddist}
For any two points $p$ and $q$ located on any $\delta$-patch $\delta_i$, $d_P(p, q) \ge |pq| \ge d_P(p, q)/(1+2\delta)$, where $d_P(p, q)$ is the geodesic distance between $p$ and $q$ on $\delta_i$ and $|pq|$ is the Euclidean distance between $p$ and $q$ on $\delta_i'$.
\end{lemma}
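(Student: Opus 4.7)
The plan is to work through the orthogonal projection $\pi$ from $\mathbb{R}^3$ along the normal $n$ of $\delta_i'$ onto the plane containing $\delta_i'$. Write $p'=\pi(p)$ and $q'=\pi(q)$, so the quantity $|pq|$ in the statement is $|p'q'|$. The only geometric input required is that every face $f\subset\delta_i$ satisfies $\angle(n_f,n)\le c\delta$ for some absolute constant $c$, which follows from the pairwise $\delta$-bound in the definition of a $\delta$-patch applied to $f$ and the representative face of $\delta_i$.

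For the left inequality $d_P(p,q)\ge|p'q'|$, I would take any geodesic $\gamma\subset\delta_i$ realizing $d_P(p,q)$ and push it through $\pi$. Orthogonal projection is $1$-Lipschitz on $\mathbb{R}^3$, so $|\pi(\gamma)|\le|\gamma|$; and since the straight segment $\overline{p'q'}$ is the shortest curve in $\delta_i'$ joining its endpoints, $|p'q'|\le|\pi(\gamma)|$. Chaining the two estimates gives $|p'q'|\le d_P(p,q)$.

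For the right inequality $d_P(p,q)\le(1+2\delta)|p'q'|$, I would construct a lift $\tilde\sigma\subset\delta_i$ of the straight segment $\sigma=\overline{p'q'}$ under $\pi$ and bound its length. First I argue that $\pi|_{\delta_i}$ is injective with image containing $\sigma$: each face $f\subset\delta_i$ is the graph of an affine function over its projection in $\delta_i'$ (since $n\cdot n_f>0$ when $\angle(n_f,n)\le c\delta$), and global injectivity together with $\sigma\subseteq\pi(\delta_i)$ follow from $\delta_i$ being a connected, graph-like portion of the convex boundary $\partial P$. The unique preimage $\tilde\sigma$ is then a continuous polygonal path on $\delta_i$ joining $p$ to $q$, since transitions between adjacent faces of $\delta_i$ occur exactly where $\sigma$ crosses the projection of their shared edge. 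On each face $f$ visited by $\tilde\sigma$, a short tangent-coordinate computation (decomposing unit tangents of $\sigma$ into components parallel and perpendicular to the line $f\cap\delta_i'$) shows that arc length is stretched under $\pi^{-1}$ by a factor of at most $1/\cos\angle(n_f,n)\le 1/\cos(c\delta)$. Summing over the faces gives $|\tilde\sigma|\le|\sigma|/\cos(c\delta)$, and since $\cos(c\delta)\ge 1/(1+2\delta)$ for $\delta\in(0,1)$ with a modest constant $c$, this yields $d_P(p,q)\le|\tilde\sigma|\le(1+2\delta)|p'q'|$.

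The step I expect to be the main obstacle is the clean justification that $\pi|_{\delta_i}$ is a bijection onto a region containing the entire segment $\overline{p'q'}$ and that the resulting lift $\tilde\sigma$ is continuous across the finitely many face boundaries it crosses. Both properties rely on the uniform angular bound on face normals combined with convexity of $P$; once they are in hand, the per-face length inflation and the numerical estimate $1/\cos(c\delta)\le 1+2\delta$ are routine.
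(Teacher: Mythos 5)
The paper itself gives no proof of this lemma: it is imported verbatim from Kapoor and Li \cite{conf/isaac/KapoorL09}, so there is no in-paper argument to compare yours against. Judged on its own, your projection/lifting strategy is the natural one, and your proof of the left inequality $d_P(p,q)\ge|p'q'|$ is complete: $1$-Lipschitzness of orthogonal projection plus minimality of the straight segment in the plane needs nothing about the patch structure at all.

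The right inequality, however, has a genuine gap at exactly the step you flag, and the justification you offer does not close it. You need the segment $\sigma=\overline{p'q'}$ to lie in $\pi(\delta_i)$ so that its lift $\tilde\sigma$ stays on $\delta_i$; you assert this follows from $\delta_i$ being ``a connected, graph-like portion of the convex boundary.'' But in this paper a $\delta$-patch is grown by breadth-first search on the dual graph subject only to the angular condition and maximality, so $\delta_i$ is an essentially arbitrary connected union of faces, and its projection onto the plane of $\delta_i'$ need not be convex (picture a patch whose projection is U-shaped, with $p'$ and $q'$ at the two tips: $\sigma$ crosses the gap, the lift leaves $\delta_i$, and the geodesic \emph{on $\delta_i$} can be much longer than $(1+2\delta)|p'q'|$). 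The uniform bound on face normals gives injectivity of $\pi|_{\delta_i}$ and the per-face stretch factor, but not containment of $\sigma$ in the image. To repair this you must either (i) work with geodesic distance on all of $\partial P$ and lift $\sigma$ to whatever faces of $\partial P$ its preimage actually crosses, then argue separately that those faces still have normals within $O(\delta)$ of $n$ (which is an additional claim, not automatic), or (ii) invoke a patch construction with projection-convex patches, as in the Dudley-style caps that Kapoor and Li use. A secondary, fixable issue: the numerical step $1/\cos(c\delta)\le 1+2\delta$ holds for all $\delta\in(0,1)$ only when $c$ is close to $1$; with the paper's definition bounding two angular coordinates separately by $\delta$, you should pin down $c$ explicitly before claiming the $(1+2\delta)$ constant rather than a $(1+O(\delta))$ one.
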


The following lemma provides an upper bound on the stretch induced by the greedy routing strategy.
\begin{lemma}
\label{lem:routpathstretch}
For any given geodesic path $\lambda$ on $\partial P$, the length of the corresponding routing path is upper bounded by $\frac{|\lambda|}{\sin{\theta_m}}$, where $\theta_m$ is half the minimum angle between two edges of a triangle in the triangulation of any face of $\partial P$.
\end{lemma}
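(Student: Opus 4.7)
The plan is to unfold the sequence of faces $F_1,\dots,F_k$ of $P$ that the plane $\eta$ defining $\lambda$ crosses onto a common plane, so that $\lambda$ becomes a straight segment from the source to the destination and the greedy routing path $\pi_r=u_0,u_1,\dots,u_m$ becomes a polygonal chain in this unfolded picture. From the description of the greedy rule, each edge $u_{j-1}u_j$ of $\pi_r$ is a side of exactly one unfolded face $F_j$ (namely the ``outgoing'' face at $u_{j-1}$), and consecutive routing vertices $u_{j-1},u_j$ are two of the three vertices of $F_j$.

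The per-face argument would go as follows. Let $a_j,b_j$ be the entry and exit points of $\lambda$ on $F_j$, and let $B_j$ denote the vertex of $F_j$ other than $u_{j-1}$ and $u_j$. The greedy rule (which chooses $u_j$ to be the vertex of the exit edge of $F_j$ on the same side as the continuation of $\lambda$ into $F_{j+1}$) forces the configuration in which the chord $a_jb_j$ separates $B_j$ from the routing edge, so that $a_j\in u_{j-1}B_j$ and $b_j\in B_ju_j$. Since the angle of $F_j$ at $B_j$ is a face-angle of $P$ and hence at least $2\theta_m$, Lemma~\ref{lem:triangineq} applied to the triangle $a_jB_jb_j$ yields $|a_jB_j|+|B_jb_j|\le|a_jb_j|/\sin\theta_m$. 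Combining this with $|u_{j-1}u_j|\le|u_{j-1}B_j|+|B_ju_j|$ and expanding $|u_{j-1}B_j|=|u_{j-1}a_j|+|a_jB_j|$, $|B_ju_j|=|B_jb_j|+|b_ju_j|$, I obtain the per-face estimate
\[
|u_{j-1}u_j|\;\le\;|u_{j-1}a_j|\;+\;\frac{|a_jb_j|}{\sin\theta_m}\;+\;|b_ju_j|.
\]

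Summing over $j$, the chord contribution is $\sum_j|a_jb_j|/\sin\theta_m = |\lambda|/\sin\theta_m$, and the remaining ``offset'' terms live on the shared edges between consecutive $F_j,F_{j+1}$. I would then close the argument by invoking the \emph{stays ahead} monotonicity of the greedy routing with respect to $\lambda$, which is argued separately in the paper and rules out zig-zags of $\pi_r$ across $\eta$: this monotonicity, together with the fact that two successive face-angles at any interior routing vertex sum to at least $4\theta_m$ in the unfolded picture, lets me pair the exit offset of $F_j$ with the entry offset of $F_{j+1}$ along the common shared edge and absorb them into a sharper application of Lemma~\ref{lem:triangineq} on the triangle $u_{j-1}u_ju_{j+1}$ rather than having them accumulate, and the boundary offsets at the source and destination vanish because the routing path meets $\lambda$ there. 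The main obstacle is precisely this last bookkeeping step: a naive per-face summation makes the offsets add, so the proof hinges on using ``stays ahead'' plus the angle-sum lower bound at each $u_j$ to carry out the telescoping/amortization carefully so that only a single factor of $1/\sin\theta_m$ remains.
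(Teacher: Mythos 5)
Your per-face decomposition is reasonable, but the proof does not close: the step you yourself flag as ``the main obstacle'' --- disposing of the entry/exit offsets $|u_{j-1}a_j|$ and $|b_ju_j|$ --- is exactly where the content of the lemma lies, and the sketched fix is not a proof. Those offsets are measured along edges of $P$, so their total is governed by the edge lengths of $P$ and by where $\lambda$ happens to cross each edge, not by $|\lambda|$; pairing the exit offset of $F_j$ with the entry offset of $F_{j+1}$ only records that $b_j=a_{j+1}$ on the shared edge (so the two offsets are equal, contributing $2|b_ju_j|$ rather than cancelling), and re-applying Lemma~\ref{lem:triangineq} to the triangle $u_{j-1}u_ju_{j+1}$ injects a fresh factor of $1/\sin\theta_m$ at every interior routing vertex instead of the single global factor you need. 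There is also a configuration you assume away: the greedy rule may forward to either endpoint of the exit edge, so consecutive routing vertices need not lie on the same side of $\lambda$, and then the picture ``$a_j\in u_{j-1}B_j$, $b_j\in B_ju_j$ with the chord separating $B_j$ from the routing edge'' fails. As written, your argument establishes only $|\pi_r|\le |\lambda|/\sin\theta_m+\sum_j\bigl(|u_{j-1}a_j|+|b_ju_j|\bigr)$, which is strictly weaker than the claim.

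The paper takes a route that avoids the offsets altogether: it compares the greedy path against the zig-zag path $u_1,v_1,u_2,v_2,\dots$ that visits \emph{both} endpoints of every edge crossed by $\lambda$, and applies Lemma~\ref{lem:triangineq} face by face with the face's own three vertices as the triangle, e.g.\ $|u_iv_i|+|v_iu_{i+1}|\le |u_iu_{i+1}|/\sin(\angle v_i/2)$, so that no crossing points of $\lambda$ ever enter the computation; the greedy path is then argued, by the stays-ahead comparison, to be no longer than this zig-zag. (The paper's own write-up is admittedly terse at the analogous final step, namely why the summed right-hand sides are at most $|\lambda|/\sin\theta_m$, but its triangles are anchored at vertices of $P$ and therefore never generate the residual terms that your version cannot absorb.) To repair your proof you should either reproduce that zig-zag comparison or exhibit an explicit charging scheme for the offsets; the amortization you describe cannot succeed in the form stated.
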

\begin{proof}
Let $\lambda$ be intersecting the sequence $E$ of edges $u_1v_1, v_1u_2, u_2v_2, v_2u_3 \ldots,$ $u_kv_k$ of $P$ in that order, respectively at $v_1'(=u_1), v_1'', v_2', v_2'', \ldots, v_k'(=v_k)$. 
Also, let $\pi_z$ be the path from $u_1$ to $v_k$ following line segments $u_1v_1, v_1u_2, u_2v_2, v_2u_3 \ldots,$ $u_kv_k$.
From Lemma~\ref{lem:triangineq}, for any $i$, $|v_i'v_i| + |v_iv_i''| \le \frac{|v_i'v_i''|}{\sin{(\frac{\angle{u_iv_iu_{i+1}}}{2})}}$ and $|v_i''u_{i+1}| + |u_{i+1}v_{i+1}'| \le \frac{|v_i''v_{i+1}'|}{\sin{(\frac{\angle{v_iu_{i+1}v_{i+1}}}{2})}}$.
Since summing the left-hand sides of these inequalities is the length of $\pi_z$, $|\pi_z| \le \frac{|\lambda|}{\sin{\theta_m}}$.
Considering the way the packet is routed from any vertex in the local routing algorithm, the length of $\pi_z$ upper bounds the length of the routing path.
Hence, the length of any routing path produced by our algorithm is upper bounded by $\frac{|\lambda|}{\sin{\theta_m}}$.
\end{proof}

The following lemma upper bounds the multiplicative stretch caused by the routing algorithm between any two points located on any $\delta$-patch.

\begin{lemma}
\label{lem:stretchlineseg}
For any two points $p$ and $q$ on any $\delta$-patch, the multiplicative stretch of the routing path produced by the greedy routing algorithm is $\frac{(1+2\delta)}{\sin{\theta_m}}$.
Here, $\theta_m$ is half the minimum angle between two edges of a triangle in the triangulation of any face of $\partial P$.
\end{lemma}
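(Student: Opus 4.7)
The plan is to reduce this to Lemma~\ref{lem:routpathstretch} after identifying the correct geodesic-like ``guiding'' path that the greedy scheme effectively follows, and then to squeeze its length against $d_P(p,q)$ using the $\delta$-patch flatness encoded in Lemma~\ref{lem:euclgeoddist}. First, I would fix the data: since $p$ and $q$ both lie on the patch $\delta_i$, the routing tables specify a plane $\eta$ orthogonal to $\delta_i'$ that contains both $p$ and $q$; write $\pi := \eta \cap \delta_i$ for the polygonal path from $p$ to $q$ obtained by intersecting $\eta$ with the faces of $\delta_i$. By construction of the greedy rule, the sequence of faces visited coincides with the sequence of faces crossed by $\pi$, and at each vertex the next hop is taken from the two endpoints of the edge of $P$ that $\pi$ crosses next. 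Hence $\pi$ is exactly the hypothesis path of Lemma~\ref{lem:routpathstretch}, giving $|\pi_r| \le |\pi|/\sin\theta_m$, where $\pi_r$ denotes the routing path actually produced.

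The remaining task is to show $|\pi| \le (1+2\delta)\, d_P(p,q)$. Because $\eta$ is orthogonal to $\delta_i'$, the orthogonal projection of $\pi$ onto $\delta_i'$ is a straight segment, namely the portion of the line $\eta \cap \delta_i'$ running from the projection $p'$ of $p$ to the projection $q'$ of $q$; denote its length by $|p'q'|$. On each face $f_j$ of $\delta_i$ traversed by $\pi$, the angle between $f_j$ and $\delta_i'$ is controlled by the $\delta$-patch condition, so the segment $\pi \cap f_j$ exceeds the length of its projection onto $\delta_i'$ by a factor of at most $1/\cos\delta$. Summing over $j$ gives $|\pi| \le |p'q'|/\cos\delta$. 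To tie this to $d_P(p,q)$, I would project an optimal geodesic on $\delta_i$ onto $\delta_i'$: projection is a contraction, and any planar curve from $p'$ to $q'$ has length at least $|p'q'|$, hence $|p'q'| \le d_P(p,q)$. Invoking the bound $1/\cos\delta \le 1+2\delta$ (which is precisely the slack captured by Lemma~\ref{lem:euclgeoddist} in the $\delta$-patch setting), one obtains $|\pi| \le (1+2\delta)\, d_P(p,q)$.

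Combining the two displayed estimates yields $|\pi_r| \le \frac{1+2\delta}{\sin\theta_m}\, d_P(p,q)$, which is the claimed multiplicative stretch. The main obstacle is step two: one must bound the length of $\pi$ (the path the algorithm follows) rather than the length of the optimal geodesic, so Lemma~\ref{lem:euclgeoddist} cannot be applied verbatim. The key observation that unlocks this is that $\pi$'s projection onto $\delta_i'$ is an honest straight segment, so the slack between $|\pi|$ and $|p'q'|$ is entirely controlled by the face-to-$\delta_i'$ angles, which are bounded by the $\delta$-patch definition. With that step verified, Lemmas~\ref{lem:routpathstretch} and~\ref{lem:euclgeoddist} compose immediately.
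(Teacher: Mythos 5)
Your proposal is correct and shares the paper's skeleton --- compose Lemma~\ref{lem:routpathstretch} with a $(1+2\delta)$ flatness factor --- but you handle the middle step differently, and more carefully. The paper's proof simply cites Lemma~\ref{lem:euclgeoddist} to obtain $d_P(p,q)\le(1+2\delta)|pq|$ and then feeds ``the geodesic'' into Lemma~\ref{lem:routpathstretch}; this silently identifies the path the greedy scheme actually follows (the plane section $\eta\cap\delta_i$) with a shortest geodesic between $p$ and $q$, which is precisely the gap you flag at the end. Your projection argument --- $\eta\cap\delta_i$ projects orthogonally onto a straight segment of the line $\eta\cap\delta_i'$, each face contributes at most a $1/\cos(O(\delta))$ elongation by the $\delta$-patch normal condition, and $|p'q'|\le d_P(p,q)$ because orthogonal projection contracts the optimal geodesic --- bounds the length of the guiding path itself, which is what Lemma~\ref{lem:routpathstretch} actually needs, so your route is the honest justification of the $(1+2\delta)$ factor. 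Two minor points to tighten: the dihedral bound supplied by the patch definition is $2\delta$ rather than $\delta$ (the angle condition is pairwise, measured against an arbitrarily chosen representative face), which still yields $1/\cos(2\delta)\le 1+2\delta$ for $\delta\in(0,1)$; and the claim that the projection of $\eta\cap\delta_i$ is exactly the segment $p'q'$ requires monotonicity of the projection along $\eta\cap\delta_i'$, which should be stated explicitly --- it follows from the patch being a graph over $\delta_i'$ for small $\delta$, by convexity of $P$. Neither affects the conclusion.
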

\begin{proof}
From Lemma~\ref{lem:euclgeoddist}, for any two points $p, q$ on any $\delta$-patch, the geodesic distance between $p$ and $q$ is upper bounded by $(1+2\delta)|pq|$.
Hence, due to Lemma~\ref{lem:routpathstretch}, the routing path between $p$ and $q$ is upper bounded by $\frac{(1+2\delta)|pq|}{\sin{\theta_m}}$.  
\end{proof}

Let $s$ and $t$ be two not necessarily representative vertices of $P$.
Let $\pi_r(s, t)$ be the path computed by the routing algorithm.
In the worst-case, $\pi_r(s, t)$ is the concatenation of paths $\pi_r(s, u), \pi_r(u, v)$ and $\pi_r(v, t)$, where $u$ and $v$ are representatives such that the representative of $s$ is $u$ and the representative of $t$ is $v$.
Since $s$ and $u$ belong to the same patch, say $\delta_i$, and since the plane orthogonal to $\delta_i'$ containing $s$ and $u$ is saved in the routing table at $s$, the stretch of $\pi_r(s, u)$ is solely due to the stretch induced by the greedy routing algorithm.
The stretch due to the routing algorithm is upper bounded in Lemma~\ref{lem:stretchlineseg}.
Analogously, the stretch of path $\pi_r(v, t)$ is upper-bounded. 
This analysis works even for $\pi_r(u, v)$ if both the representatives $u$ and $v$ belong to the same patch.
Otherwise, the stretch of $\pi_r(u, v)$ relies on the stretch of $G'$, the stretch of the routing path computed by the algorithm in \cite{conf/spaa/ThorupZwick01} to route from $u$ to $v$, stretch due to projecting each edge along that path on $P$, and the stretch caused by our greedy routing algorithm.
The following lemma provides an upper bound on the stretch of $\pi_r(s, t)$ when $s$ and $t$ lie on distinct patches.

\begin{lemma}
For any two vertices $s$ and $t$ of $P$ located on two distinct grid cells, the length of the path computed by the routing algorithm is upper bounded by $\frac{8+\epsilon}{\sin{\theta_m}}(D+|\pi_{opt}(s, t))|)$.
Here, parameter $D$ is the maximum length of the diagonal of any cell when $\partial P$ is partitioned into $\frac{1}{\epsilon^3}$ geodesic cells of equal size, and $\theta_m$ is half the minimum angle between any two neighbouring triangulation edges of $P$ on $\partial P$.
\end{lemma}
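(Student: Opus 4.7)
The plan is to decompose the routing path and bound each piece separately using the stretch lemmas already established. Let $u$ and $v$ be the representative vertices of $s$ and $t$, respectively, so that $\pi_r(s,t)$ is the concatenation of $\pi_r(s,u)$ (inside one patch), $\pi_r(u,v)$ (routed via the spanner $G'$), and $\pi_r(v,t)$ (inside another patch).

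For the two end pieces, $(s,u)$ and $(v,t)$ each lie inside a single grid cell, so their Euclidean separation is at most $D$. Applying Lemma~\ref{lem:stretchlineseg} with $\delta = \epsilon$ gives
\[
|\pi_r(s,u)| + |\pi_r(v,t)| \;\le\; \frac{2(1+2\epsilon)\,D}{\sin\theta_m}.
\]

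For the middle piece I would argue in three layers. First, by Lemma~\ref{lem:G'size} the spanner $G'$ has multiplicative stretch $1/(\cos\epsilon - \sin\epsilon) \le 1+O(\epsilon)$, so the spanner distance $d_{G'}(u',v')$ is at most $(1+O(\epsilon))\,d_{P'}(u',v')$. Second, every edge of $G'$ lies on a single face of $P'$, and the plane containing that edge and orthogonal to the face traces out, after projection back to $P$, a curve sitting inside one $\epsilon$-patch. By Lemma~\ref{lem:euclgeoddist} the geodesic length on $P$ of each such projected edge is at most $(1+2\epsilon)$ times the Euclidean length of the spanner edge on $P'$. Third, the greedy algorithm then replaces each projected edge by a routing walk through vertices of $P$, and Lemma~\ref{lem:routpathstretch} multiplies each piece by at most $\tfrac{1}{\sin\theta_m}$. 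Detours through the marked vertices introduced for Steiner points are absorbed into this factor, because each such detour travels at most one extra edge of a face inside a single patch, and that extra edge is itself accounted for by the $\tfrac{1}{\sin\theta_m}$ stretch bound.

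Combining the three layers yields
\[
|\pi_r(u,v)| \;\le\; \frac{(1+\epsilon)(1+2\epsilon)}{\sin\theta_m}\, d_{P'}(u',v'),
\]
so it remains to compare $d_{P'}(u',v')$ with $d_P(s,t)$. Since $P \subseteq P'$ and $P'$ is built by intersecting half-spaces tangent to $P$ along one face per $\epsilon$-patch, the projection of a shortest path on $P$ onto $P'$ is a curve on $P'$ whose length exceeds $d_P(s,t)$ by only a $(1+O(\epsilon))$ factor; together with $|su|, |vt| \le D$ this gives $d_{P'}(u',v') \le (1+O(\epsilon))\bigl(d(s,t) + 2D\bigr)$. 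Plugging the pieces together and absorbing the lower-order $\epsilon$ terms into the leading constant yields $|\pi_r(s,t)| \le \tfrac{8+\epsilon}{\sin\theta_m}\bigl(D + d(s,t)\bigr)$ for $\epsilon \in (0,1)$, which is the claim. The main obstacle I expect is the third step: namely, cleanly bounding how a geodesic on $P$ compares to one on $P'$ across patch boundaries, and verifying that the Steiner insertions on those boundaries enter the spanner-stretch accounting without blowing up the constant $8$; once this is controlled, the rest is a telescoping of the per-patch stretches supplied by Lemmas~\ref{lem:euclgeoddist}, \ref{lem:routpathstretch}, and \ref{lem:stretchlineseg}.
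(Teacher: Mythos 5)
There is a genuine gap in the middle piece: you treat the route from $u'$ to $v'$ through $G'$ as if it followed a shortest path in $G'$, so your accounting for $\pi_r(u,v)$ only contains the spanner stretch $(1+\epsilon)$, the patch-projection factor $(1+2\epsilon)$, and the greedy factor $\tfrac{1}{\sin\theta_m}$. But the preprocessing phase does not store all-pairs shortest paths in $G'$; it computes compact routing tables for $G'$ with the Thorup--Zwick scheme, and the route that scheme produces between two nodes of $G'$ has multiplicative stretch $3$ relative to $d_{G'}(u',v')$. This factor of $3$ is not a lower-order term to be absorbed: it is precisely where the constant $8$ in the statement comes from. The paper's accounting is $\tfrac{1}{\sin\theta_m}\bigl(2D + 3(1+\epsilon)(1+2\delta)\,|\pi_{opt}(u,v)|\bigr)$ with $|\pi_{opt}(u,v)| \le 2D + |\pi_{opt}(s,t)|$, giving $(2+6)D = 8D$ plus $3\,|\pi_{opt}(s,t)|$ up to $O(\epsilon)$ corrections (and a final rescaling $\epsilon \mapsto \epsilon/30$ to turn $8+30\epsilon$ into $8+\epsilon$). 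Your version, carried through honestly, would yield a coefficient near $4$, which would mean you have proved a stronger statement than is actually available --- a sign that a source of stretch has been dropped rather than that the bound is slack.

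A secondary remark: the step you flag as the main obstacle (comparing $d_{P'}(u',v')$ to $d_P(s,t)$ across patch boundaries) is handled in the paper more directly than you anticipate. Rather than projecting an optimal path of $P$ onto $P'$ and controlling the distortion at patch crossings, the paper bounds $|\pi_{opt}(u,v)|$ on $P$ by $2D + |\pi_{opt}(s,t)|$ via the triangle inequality (since $s,u$ and $v,t$ each lie in a common grid cell of geodesic diameter at most $D$), and then uses Lemma~\ref{lem:euclgeoddist} per spanner edge to relate lengths on $P'$ back to geodesic lengths on $P$. Once you insert the missing factor of $3$ and adopt this triangle-inequality shortcut, the rest of your decomposition matches the paper's argument.
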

\begin{proof}
Since $s$ and $u$ are located on the same patch, due to Lemma~\ref{lem:stretchlineseg}, the multiplicative stretch induced in $\pi_r(s, u)$ is upper bounded by $\frac{(1+2\delta)}{\sin{\theta_m}}$.
That is, for the points of projection $s'$ and $u'$ on $P'$, respectively corresponding to vertices $s$ and $u$ of $P$, this upper bounds the length of the path resulting from projecting the line segment $s'u'$ on the patch of $\partial P$ that contains both $s$ and $u$.
The same upper bound on the multiplicative stretch is applicable to $\pi_r(v, t)$.
Hence, the length of the routing path from $s$ to $u$ and $v$ to $t$ together is upper bounded by $\frac{(1+2\delta)}{\sin{\theta_m}} 2D$.

Since each cone introduced to compute $G'$ has a cone angle of at most $\epsilon$, from the analysis of \textTheta-graphs \cite{conf/stoc/Clarkson87,conf/swat/Keil88} and from Lemma~\ref{lem:G'size}, the multiplicative stretch of $G'$ is $1+\epsilon$.
Hence, a shortest path in $G'$ from $u'$ to $v'$ has a $(1+\epsilon)$ multiplicative stretch.
Thorup and Zwick's routing algorithm~\cite{conf/spaa/ThorupZwick01} applied over $G'$ yields a $3$ multiplicative stretch.
For any line segment on any patch of $P$, due to Lemma~\ref{lem:stretchlineseg}, the path produced by the greedy algorithm causes a stretch of $\frac{1+2\delta}{\sin{\theta_m}}$.
Considering each line segment along the shortest path from $u'$ to $v'$ in $G'$ projected to $P$ lies on a patch, including all the line segments along the routing path from $u$ to $v$ together cause a multiplicative stretch of $\frac{1}{\sin{\theta_m}}(1+2\delta)(1+\epsilon)3$.
(Below analysis accounts for the term $2D$.)
By choosing $\delta = \epsilon$, considering $\epsilon \in (0, 1)$, this is upper bounded by $\frac{1}{\sin{\theta_m}}(3+15\epsilon)$.

Since the geodesic distance between any two points on a grid cell is upper bounded by $D$, the length of the routing path from $s$ to $t$ is upper bounded by 
$\frac{1}{\sin{\theta_m}}(2D + (3+15\epsilon)|\pi_{opt}(u, v)|)$
$\le \frac{1}{\sin{\theta_m}}(2D + (3+15\epsilon)(2D + |\pi_{opt}(s, t))|)$
$= \frac{1}{\sin{\theta_m}}((8+30\epsilon) D + (3+15\epsilon)|\pi_{opt}(s, t))|)$
$\le \frac{8+30\epsilon}{\sin{\theta_m}}(D+|\pi_{opt}(s, t))|)$.
Replacing $\epsilon$ with $\frac{\epsilon}{30}$ yields the bound stated in the theorem statement.
\ignore{this factor seem to have accounted in the stretch of spanner G' = \bigcup_i G_i'; that is, a path having a line segment joining a projected vertex on $G_i'$ to a Steiner point on the boundary of $G_i'$ hints the corresponding path on P requires to cross an edge of P; so not sure whether the approximation factor due to this needs to be included

Stretch in going from $s$ to $r_s$ - additive ($\frac{(1+2\delta)D_m\sqrt{2}}{\sqrt{k}\sin{\theta_m}}$), as described in above sections. 
Define $B= \frac{(1+2\delta)D_m\sqrt{2}}{\sqrt{k}}$. 

\begin{figure}
  \centering
  \includegraphics[width=0.4\textwidth]{figs/extra_additive_stretch.pdf}
  \caption{Extra additive stretch incurred in $P'$ for the corresponding path on $\partial P$. 
  Here $|b_i^{'e}ob_{i+1}^{'s}|$ is the extra stretch in going from patch $T_1$ to $T_2$}
  \label{extra_add}
\end{figure}

The path on $ P'$ connecting each start point and end point of a segment on $\partial P'$ is less than or equal to that on $ P$. 
But an extra additive stretch is encountered in going from one patch to another on $\partial P'$ as shown in \ref{extra_add}. 
Bound on the additive stretch $b^{'e}_iob^{'s}_{i+1}$: In $\triangle sb_i^eb_i^{'e}$, we have $|b^{e}_ib^{'e}_i| \leq |sb_i^e|\sin\delta \leq D^i_p\sin{\delta} $, as $\angle b^{e}_isb^{'e}_i \leq \delta$ and $|sb_i^e| \leq D_p^i$, where $D^i_p$ is the diameter of $i^{th}$ patch. 
Similarly, $|b^{s}_{i+1}b^{'s}_{i+1}| \leq D^{i+1}_p\sin{\delta} $. 
Also $\angle b^{'e}_iob^{'s}_{i+1} \geq \beta - 2\delta$, where $\beta$ is the minimum angle between any of the two adjacent faces of $P$. 
We have, by lemma 4.1, $\frac{|b^{'e}_io + b^{'s}_{i+1}o|}{|b^{'e}_ib^{'s}_{i+1}|} \leq \sin{\frac{\angle b^{'e}_iob^{'s}_{i+1}}{2}} \leq \sin{\frac{\beta - 2\delta}{2}} $. 
Also, by triangle inequality,  $|b^{'e}_ib^{'s}_{i+1}| \leq |b^{e}_ib^{'e}_i|+ |b^{s}_{i+1}b^{'s}_{i+1}| \leq 2D_{pm}\sin{\delta}$ ,where $D_{pm}$ is the maximum diameter of any patch.
Thus, the projected path on $ P'$ corresponding to optimal path on $ P$ from $r_s$ to $r_t$ is less than or equal to $[\pi_{opt}(r_s,r_t) + A]$, where $A = \frac{2D_{pm}\sin{\delta}}{\sin{\frac{\beta-2\delta}{2}}} \cdot k$, where, $k$ is the number of patches crossed $(\leq \frac{4\pi^2}{\delta^2})$. 
Since $\sin{\delta}\leq \delta$, we have $A\leq \frac{80D_{pm}}{\delta \sin{\frac{\beta-2\delta}{2}}}$.
}
\end{proof}

\begin{theorem}
Given a convex polytope $P$ with $n$ vertices and an input parameter $\epsilon \in (0, 1)$, the preprocessing algorithm assigns a unique label of size $O((\lg{(\min(n, \frac{1}{\epsilon}))})^2)$ bits to each vertex of $P$ and it computes a routing table at every vertex of $P$ of amortized size $O((\min(n, \frac{1}{\epsilon^{3/2}}))\lg{n})$ bits in $O(n \min(n^2,$ $\frac{1}{\epsilon^7} \lg{n}))$ time so that any packet is routed along a path on $\partial P$ such that the length of the routing path from any vertex $s$ of $P$ to any vertex $t$ of $P$ is upper bounded by $\frac{8+\epsilon}{\sin{\theta_m}}(D+d(s, t))$, while the size of routing information in the header of any packet is $O((\lg{(\min(n, \frac{1}{\epsilon}))})^2)$ bits.
Here, parameter $D$ is the maximum length between any two points of any cell when $\partial P$ is partitioned into $\frac{1}{\epsilon^3}$ geodesic cells,
$\theta_m$ is half the minimum angle between any two edges of a triangle in the triangulation of $\partial P$, and
$d(s, t)$ is the shortest distance on $\partial P$ from $s$ to $t$.
\end{theorem}

\section{Conclusions}
\label{sect:conclu}
This paper presented a geodesic local routing scheme for routing between the vertices of a convex polytope $P$.
The input parameter $\epsilon \in (0, 1)$, the preprocessing time, sizes of routing tables at the vertices of $P$, size of labels associated to vertices, and the size of the header of the packet are upper bounded in terms of the number of vertices of $P$ and $\epsilon$.
The stretch of the routing path in the devised scheme depends on $D$, which relies on the local geometry of $P$.
This research approximates $P$ with another convex polytope $P'$ whose complexity is a function of $\epsilon$, and computes a geodesic spanner with vertices of $P$ projected on the faces of $P'$ to help in computing routing tables at the vertices of $P$.
In addition, for computing routing tables at the vertices of $P'$, we used a routing scheme designed for abstract graphs.
These help in carefully choosing the next hop at every vertex along the routing path.
To our knowledge, this is the first result for local routing on polytopes.
Future work could include computing patches of fixed diameter to further reduce the influence of parameter $D$.
Besides, future research could aim to improve the sizes of routing tables and the preprocessing time.
In particular, in parallel to the representatives found for abstract graphs as in \cite{journals/jalgo/Cowen01,conf/spaa/ThorupZwick01}, it may be possible to cluster the vertices of $P$ in each patch using geometric techniques to find representative vertices.
It would be interesting to extend this work to routing between the vertices of a polytope which is not necessarily convex.
Besides, when $\mathbb{R}^3$ consists of obstacles that are convex polytopes, routing between the vertices of those polytopes so that the routing path does not intersect the interior of any of those obstacles could also be an interesting problem to explore.

\subsubsection*{Acknowledgement}

This research of R. Inkulu is supported in part by the National Board for Higher Mathematics (NBHM) grant 2011/33/2023NBHM-R\&D-II/16198.

\bibliographystyle{plain}

\end{document}